\newcommand{\cmnts}{0} %
\newcommand{\nocommentsv}[1]{\ifthenelse{\equal{\cmnts}{0}}{#1}{}}
\newcommand{\commentsv}[1]{\ifthenelse{\equal{\cmnts}{0}}{}{#1}}
\newcommand{\cmt}[1]{\commentsv{\marginpar{{\color{red}{#1}}}}}
\newif\ifextended
\title{On Probabilistic Term Rewriting\thanks{This work is partially supported by the ANR
    projects 14CE250005 ELICA, and 16CE250011 REPAS,
and the FWF project Y757.}}%
\author{Martin Avanzini\inst{1} \and Ugo {Dal Lago}\inst{1}\inst{2} and Akihisa Yamada\inst{3}}
\institute{
  INRIA Sophia Antipolis,
  France
  \and
  Department of Computer Science,
  University of Bologna, Italy
  \and
  National Institute of Informatics,
  Japan
}
\begin{document}

\maketitle

\begin{abstract}
We study the termination problem for probabilistic term rewrite
systems.
We prove that the interpretation method is
sound and complete for a strengthening of positive almost sure
termination, when abstract reduction systems and term rewrite
systems are considered.  Two instances of the interpretation method---%
polynomial and matrix interpretations---%
are analyzed and shown to capture interesting and nontrivial examples
when automated.
%
We capture probabilistic computation in a novel way by way of
multidistribution reduction sequences, this way accounting for both
the nondeterminism in the choice of the redex and the probabilism
intrinsic in firing each rule.
\end{abstract}

\section{Introduction}
Interactions between computer science and probability theory are
pervasive and extremely useful to the first discipline. Probability
theory indeed offers models that enable \emph{abstraction}, but it
also suggests a new model \emph{of computation}, like in randomized
computation or cryptography~\cite{GoldwasserMicali}. All this has
stimulated the study of probabilistic computational models and
programming languages: probabilistic variations on well-known models
like automata~\ifextended\cite{de1956computability,Rabin63}\else\cite{Rabin63}\fi,
Turing machines~\ifextended\cite{Santos69,gill77}\else\cite{Santos69}\fi,
and the $\lambda$-calculus~\ifextended\cite{SahebDjahromi,JonesPlotkin}\else\cite{SahebDjahromi}
\fi
 are known from the early days
of theoretical computer science.

The simplest way probabilistic choice can be made available in
programming is endowing the language of programs with an
operator modeling sampling from (one or many) distributions. Fair,
binary, probabilistic choice is for example perfectly sufficient to
get universality if the underlying programming language is itself
universal (e.g., see \cite{dallagozorzi2012}).

\emph{Term rewriting}~\cite{Terese}
is a well-studied model of computation when no probabilistic behavior is involved.
It provides
a faithful model of pure functional programming which is, up to a
certain extent, also adequate for modeling higher-order parameter
passing~\cite{DalLagoMartini}. What is peculiar in term rewriting is
that, in principle, rule selection turns reduction into a potentially
nondeterministic process.  The following question is then a natural
one: is there a way to generalize term rewriting to a
fully-fledged \emph{probabilistic} model of computation?
Actually, not
much is known about probabilistic term rewriting:
the definitions we find
in the literature are
one by Agha et al.~\cite{AMS06} and
one by Bournez and Garnier~\cite{BG:RTA:05}.
We base our work on the latter,
where 
probabilistic rewriting is captured as a Markov decision process;
rule selection remains a nondeterministic process, but each rule
can have one of many possible outcomes, each with its own probability
to happen. Rewriting thus becomes a process in which both
nondeterministic and probabilistic aspects are present and
intermingled. When firing a rule, the reduction process implicitly
samples from a distribution, much in the same way as when performing
binary probabilistic choice in one of the models mentioned above.

In this paper, we first define a new, simple framework for discrete
probabilistic reduction systems, which properly generalizes standard
abstract reduction systems~\cite{Terese}.  In particular, what
plays the role of a reduction sequence, usually a (possibly
infinite) sequence $a_1\rightarrow a_2\rightarrow\ldots$ of \emph{states},
is a sequence
$\mu_1 \pTO \mu_2 \pTO \ldots$
 of \emph{(multi)distributions}
over the set of states. A multidistribution is not merely a
distribution, and this is crucial to appropriately account for both
the probabilistic behaviour of each rule and the nondeterminism in
rule selection. Such correspondence does not exist in Bournez and Garnier's
framework, as
nondeterminism has to be resolved by a \emph{strategy}, in order to define
reduction sequences. However, the two frameworks turn out to be
equiexpressive, at least as far as every rule has finitely many possible outcomes.
We then prove that the probabilistic ranking functions~\cite{BG:RTA:05} are 
sound and complete\footnote{%
The completeness of probabilistic ranking functions has been refuted in~\cite{FH:POPL:15},
but the counterexample there is invalid since a part of reduction steps
are not counted. We thank 
Luis Mar\'ia Ferrer Fioriti for this analysis.
}
 for proving \emph{strong almost sure termination},
a strengthening of \emph{positive almost sure termination}~\cite{BG:RTA:05}.
We moreover show that ranking functions provide bounds on expected runtimes.

This paper's main contribution, then, is the definition of a simple
framework for \emph{probabilistic term rewrite systems} as an example
of this abstract framework. Our main aim is studying whether any of
the well-known techniques for termination of term rewrite systems can
be generalized to the probabilistic setting, and whether they can be
automated. We give positive answers to these two questions, by
describing how polynomial and matrix interpretations can indeed be
turned into instances of probabilistic ranking functions, thus
generalizing them to the more general context of probabilistic term
rewriting.  We moreover implement these new techniques into the
termination tool \texttt{NaTT} \cite{NaTT}.
\ifextended\else
The implementation and an extended version of this paper are available at
\url{http://www.trs.cm.is.nagoya-u.ac.jp/NaTT/probabilistic}.
\fi

\section{Related Work}
Termination is a crucial property of programs, and has been widely
studied in term rewriting. Tools checking and certifying
termination of term rewrite systems are nowadays capable of
implementing tens of different techniques, and can prove termination
of a wide class of term rewrite systems, although the underlying
verification problem is well-known to be undecidable~\cite{Terese}. 

Termination remains an interesting and desirable property in a probabilistic
setting, e.g., in probabilistic programming~\cite{gmrbt2008} where
inference algorithms often rely on the underlying program to
terminate. But what does termination \emph{mean} when systems become
probabilistic? If one wants to stick to a \emph{qualitative}
definition, almost-sure termination is a well-known answer: a
probabilistic computation is said to almost surely terminate iff
non-termination occurs with null probability.  One could even
require \emph{positive} almost-sure termination, which
asks the expected time to termination to be finite.
Recursion-theoretically,
checking (positive) almost-sure termination is harder than checking
termination of non-probabilistic programs, where termination is at
least recursively enumerable, although undecidable: in a universal
probabilistic imperative programming language, almost sure termination
is $\Pi^0_2$ complete, while positive almost-sure termination is
$\Sigma^0_2$ complete \cite{KaminskiKatoen}.

Many sound verification methodologies for probabilistic termination
have recently been introduced (see,
e.g.,~\cite{BG:RTA:05,BG:RTA:06,Gnaedig07,FH:POPL:15,CFG:CAV:16}). In particular,
the use of ranking martingales has turned out to be quite successful
 when the analyzed program is imperative, and thus does not
have an intricate recursive structure. When the latter
holds, techniques akin to sized types have been shown to be applicable
~\cite{DalLagoGrellois}.
Finally, as already
mentioned, 
the current work can be seen as stemming from
the work by Bournez et al.~\cite{BournezKirchner,BG:RTA:05,BG:RTA:06}.
The added value compared to
their work are first of all the notion of multidistribution as way
to give an instantaneous description of the state of the underlying
system which exhibits both nondeterministic and
probabilistic features. Moreover, an interpretation method inspired by
ranking functions is made more general here, this way
acommodating not only interpretations over the real numbers, but also
interpretations over vectors, in the sense of matrix interpretations.
Finally, we provide an automation of polynomial and matrix interpretation inference
here, whereas nothing about implementation were presented in
Bournez's work.


\ifextended
\section{Preliminaries}
In this section, we give some mathematical preliminaries which will be
essential for the rest of the development.  With $\Real$ we denote the
set of \emph{real numbers}, with $\Realpos$ the set of
\emph{non-negative real numbers}, and with $\Realinfty$ the
set $\Realpos\cup\{\infty\}$. \cmt{def: multiset and notations
  missing (in particular disjoint sum, finiteness).}  \cmt{def:
  finitely branching relation.}  \cmt{def: proper = transitive and
  irreflexive.}

\paragraph{Probability Distributions.}
A \emph{(probability) distribution} on a countable set $A$ is a
function $\distone \ofdom A \to \Realpos$ such that $\sum_{a \in A}
\distone(a) = 1$.  The \emph{support} of a distribution $\distone$ is
the set $\Supp{\distone}\defsym\{\objone\in A\mid\distone(a)>0\}$.  We
write $\Dist{A}$ for the set of probability distributions over $A$.
We write $\prs{d(a_1):a_1,\dots;d(a_n):a_n}$ for distribution $d$ when
$\Supp{d}$ is a finite set $\{a_1,\dots,a_n\}$. 
pairwise distinct $a_i$s)

\ifextended
\paragraph{Stopping times.}
A \emph{stopping time} with respect to a stochastic process
$\RV=\{\rv_n\}_{n \in \N}$ is a random variable $S$, taking values in
$\N \cup \{\infty\}$, with the property that for each $t \in \N \cup
\{\infty\}$, the occurrence or non-occurrence of the event $S = t$
depends only on the values of $\seq[0][n]{\rv}$.  An instance of a
stopping time is the \emph{first hitting time} with respect to a set
$H$ is defined as $\tau_H(\omega) \defsym \min\{n \mid \rv_n(\omega)
\in H\}$, where $\min \varnothing = \infty$.  Every stopping time $S$
satisfies
\begin{equation}
  \label{eq:expected}
  \E(S) = \sum_{n=1}^\infty n \cdot \prob{S = n} = \sum_{n=1}^\infty \prob{S \geq n} \tpkt
\end{equation}
\fi


\fi
\section{Probabilistic Abstract Reduction Systems}
An \emph{abstract reduction system (ARS)}
on a set $\Cone$ is a binary relation
${\to} \subseteq \Cone \times \Cone$. 
Having $a \to b$ means
that $a$ reduces to $b$ in one step,
or $b$ is a one-step reduct of $a$. 
Bournez and Garnier~\cite{BG:RTA:05} extended the ARS formalism to probabilistic computations,
which we will present here using slightly different notations.

We write $\Realpos$ for the set of non-negative reals.
A \emph{(probability) distribution} on a countable set $A$ is a
function $d : A \to \Realpos$ such that $\sum_{a \in A} d(a) = 1$.
We say a distribution $d$ is \emph{finite} if
its \emph{support} $\Supp{d}\defsym\{a\in A\mid d(a)>0\}$
is finite,
and write $\prs{d(a_1):a_1,\dots;d(a_n):a_n}$ for $d$ if
$\Supp{d} = \{a_1,\dots,a_n\}$
(with pairwise distinct $a_i$s).
We write $\FDist{A}$ for the set of finite distributions on $A$.

\begin{definition}[PARS, \cite{BG:RTA:05}]
A \emph{probabilistic reduction} over a set $\Cone$ is a pair
of $a \in \Cone$ and $d \in \FDist{\Cone}$,
written $a \to d$.
A \emph{probabilistic ARS (PARS)} $\PARSone$ over $\Cone$ is
a (typically infinite) set of probabilistic reductions.
An object $\objone\in \Cone$ is called \emph{terminal} (or a \emph{normal form}) in $\PARSone$,
if there is no $\distone$ with 
$\objone \to \distone \in \PARSone$.
With $\Term{\PARSone}$ we denote the set of terminals in $\PARSone$.
\end{definition}

The intended meaning of $a \to d \in \PARSone$
is that
``there is a reduction step $a \to_\PARSone b$
with probability $\distone(b)$''.

\begin{example}[Random walk]\label{ex:rw}
A random walk over $\N$
with \emph{bias} probability $p$
is modeled
by the PARS $\Arw[p]$
 consisting of the probabilistic reduction
\[
  n + 1 \to \prs{p:n; 1-p:n+2} \quad \text{for all $n \in \N$.}
\]
\end{example}

A PARS describes both nondeterministic and probabilistic choice;
we say a PARS $\PARSone$ is \emph{nondeterministic}
if $a \to d_1, a \to d_2 \in \PARSone$ with $d_1 \neq d_2$.
In this case, the distribution of one-step reducts of $a$
is nondeterministically chosen from $d_1$ and $d_2$.
Bournez and Garnier~\cite{BG:RTA:05} describe reduction sequences via stochastic sequences,
which demand nondeterminism to be resolved by fixing a \emph{strategy} (also called \emph{policies}).
In contrast, we capture 
nondeterminism by defining a reduction
relation $\pTO[\PARSone]$ on distributions,
and emulate ARSs by $\prs{1:a} \pTO[\PARSone] \prs{1:b}$ when $a \to \prs{1:b} \in \PARSone$.
For the probabilistic case, taking Example~\ref{ex:rw}
we would like to have 
\[
  \prs{1:1} \pTO[{\Arw[\half]}] \prs{\half:0; \half:2} \tkom
\]
meaning that the distribution of one-step reducts of $1$ is
$\prs{\half:0; \half:2}$.
Continuing the reduction,
what should the distribution of two-step reducts of $1$ be?
Actually, it cannot be a distribution (on $A$):
by probability $\half$ we have no two-step reduct of $1$.
One solution, taken by \cite{BG:RTA:05}, is to introduce $\bot \notin A$
representing the case where no reduct exists.
We take another solution:
we consider
generalized distributions where probabilities may sum up to less than one,
allowing
\[
  \prs{1:1} \pTO[{\Arw[\half]}] \prs{\half:0; \half:2} \pTO[{\Arw[\half]}]
  \prs{\quoter:1;\quoter:3}\tpkt
\]
Further continuing the reduction, one would expect
$\prs{\eighth:0; \quoter:2; \eighth:4}$ as the next step,
but note that a half of the probability $\quoter$ of $2$
is the probability of reduction sequence $2 \to_{\Arw[\half]} 1 \to_{\Arw[\half]} 2$,
and the other half is of $2 \to_{\Arw[\half]} 3 \to_{\Arw[\half]} 2$.

\begin{example}
\label{ex:nd}
  Consider the PARS $\Amd$ consisting of the following rules:
  \begin{align*}
    \fun{a} & \to \prs{\half:\fun{b_1}; \half:\fun{b_2}} &
    \fun{b_1} & \to \prs{1:\fun{c}} & \fun{c} & \to \prs{1:\fun{d_1}}\\
    &&
    \fun{b_2} & \to \prs{1:\fun{c}} &
    \fun{c} & \to \prs{1:\fun{d_2}} \tpkt
  \end{align*}
 Reducing $\fun{a}$ twice always yields $\fun{c}$, so
 the distribution of the two-step reducts of $\fun{a}$
 is $\prs{1:\fun{c}}$.
 More precisely,
 there are two paths to reach $\fun{c}$:
 $\fun{a} \to_\Amd \fun{b_1} \to_\Amd \fun{c}$ and
 $\fun{a} \to_\Amd \fun{b_2} \to_\Amd \fun{c}$,
 each with probability $\half$.
 Each of them can be nondeterministically continued to $\fun{d_1}$
 and $\fun{d_2}$,
 so the distribution of three-step reducts of $\fun{a}$ is
 the nondeterministic choice among
 $\prs{1:\fun{d_1}}$,
 $\prs{\half:\fun{d_1}; \half:\fun{d_2}}$,
 $\prs{1:\fun{d_2}}$. 
 On the other hand, 
 if we defined the reduction relation 
 $\pTO[\Amd]$ in such a way that $\prs{1:\fun{c}}$
 reduced to $\prs{\half:\fun{d_1}; \half:\fun{d_2}}$, 
 then we would not be able to emulate ARSs
 by reducing $\prs{1:\fun{c}}$
 only to $\prs{1:\fun{d_1}}$ and $\prs{1:\fun{d_2}}$.
\end{example}

These analyses lead us to the following generalization of distributions.

\begin{definition}[Multidistributions]
A \emph{multidistribution} on $\Cone$ is a finite multiset $\mdistone$
of pairs of $a\in A$ and
$0 \le p \le 1$, written $p:a$,
such that \[
\sz{\mu} \defsym \sum_{\pr{p}{a} \in \mdistone} p \ \le\ 1 \tpkt
\]
We denote the set of multidistributions on $\Cone$ by $\FMDist{\Cone}$.
\end{definition}
Abusing notation, we identify
$\prs{p_1:a_1, \dots; p_n:a_n} \in \FDist{A}$ with multidistribution
$\prms{p_1:a_1, \dots; p_n:a_n}$ as no confusion can arise.
For a function $f: A \to B$, we often generalize the domain and range to multidistributions as follows:
\[ f\bigl(\prms{p_1:a_1,\dots;p_n:a_n}\bigr) \defsym \prms{p_1:f(a_1),\dots;p_n: f(a_n)}\tpkt
\]
The \emph{scalar multiplication}
of a multidistribution
is
$p \cdot \prms{q_1:a_1,\dots; q_n:a_n}
\defsym \prms{p\cdot q_1 : a_1, \dots; p\cdot q_n: a_n}$,
which is also a multidistribution if $0 \le p \le 1$.
More generally, multidistributions are closed under \emph{convex multiset unions},
defined as
$\biguplus_{i = 1}^n p_i\cdot\mu_i$ with
$p_1,\dots,p_n \ge 0$ and $p_1+\cdots+p_n \le 1$.

Now we introduce the reduction relation $\pTO[\PARSone]$ over multidistributions. 

\begin{definition}[Probabilistic Reduction]\label{d:reduction}
Given a PARS $\PARSone$,
we define the \emph{probabilistic reduction relation}
${\pTO[\PARSone]} \subseteq \FMDist{A} \times \FMDist{A}$ as follows:
\[
\infer{\mset{\pr{1}{\objone}} \pTO[\PARSone] \msetempty}{\objone\in\Term{\PARSone}}\qquad
\infer{\mset{\pr{1}{\objone}} \pTO[\PARSone] \distone}{\objone \to \distone \in \PARSone}\qquad
\infer{
	\biguplus_{i=1}^n p_i \cdot \mu_i \pTO[\PARSone]
	\biguplus_{i=1}^n p_i \cdot \rho_i
}{
	\mu_1 \pTO[\PARSone] \rho_1 & \dots &
	\mu_n \pTO[\PARSone] \rho_n
}
\]
In the last rule, we assume $p_1,\dots,p_n \ge 0$ and $p_1+\cdots+p_n \le 1$.
We denote by $\Seq\mu$ the set of all possible reduction sequences from $\mu$,
i.e.,
$\{\mu_i\}_{i\in\N} \in \Seq\mu$ iff $\mu_0 = \mu$ and $\mu_i \pTO[\PARSone] \mu_{i+1}$ for
any $i \in \N$.
\end{definition}
Thus $\mdistone \pTO[\PARSone] \mdisttwo$ if $\mdisttwo$ is obtained from 
$\mdistone$ by replacing every nonterminal $a$ in $\mdistone$ with 
all possible reducts with respect to some $a \to \distone \in \PARSone$,
suitably weighted by probabilities, and by 
removing terminals. The latter implies that 
$\sz{\mdistone}$ is not preserved during reduction: it decreases
by the probabilities of terminals.


To continue Example~\ref{ex:rw},
we have the following reduction sequence:
  \begin{align*}
    \prms{1:1}
    & \pTO[{\Arw[\half]}] \prms{\half:0;\half:2}
      \pTO[{\Arw[\half]}] \msetempty \uplus \prms{\quoter:1;\quoter:3} \\
    & \pTO[{\Arw[\half]}] \prms{\eighth:0;\eighth:2} \uplus \prms{\eighth:2;\eighth:4}
      \pTO[{\Arw[\half]}] \dots
  \end{align*}
The use of multidistributions resolves the issues indicated in 
Example~\ref{ex:nd} when dealing with nondeterministic systems.
  We have, besides others, the reduction 
  \[
    \prms{1:\fun{a}} 
    \pTO[\Amd] \prms{\half:\fun{b_1}, \half:\fun{b_2}}
    \pTO[\Amd] \prms{\half:\fun{c}, \half:\fun{c}}
    \pTO[\Amd] \prms{\half:\fun{d_1}, \half:\fun{d_2}}\tpkt
  \]
  The final step is possible 
  because $\prms{\half:\fun{c}, \half:\fun{c}}$ is not collapsed to $\prms{1:\fun{c}}$.

When every probabilistic reduction in $\PARSone$ is of form $a \to \prs{1:b}$ for some $b$,
then $\pTO[\PARSone]$ simulates the non-probabilistic ARS
via the relation $\prms{1:\cdot} \pTO[\PARSone] \prms{1:\cdot}$.
Only a little care is needed as normal forms are followed by $\varnothing$.

\begin{proposition}
Let ${\hookrightarrow}$ be an ARS and define $\PARSone$ by $a \to \prs{1:b} \in \PARSone$ iff
$a \hookrightarrow b$.
Then $\prms{1:a} \pTO[\PARSone] \mu$ iff
either $a \hookrightarrow b$ and $\mu = \prms{1:b}$ for some $b$,
or $\mu = \varnothing$ and $a$ is a normal form in $\hookrightarrow$.
\end{proposition}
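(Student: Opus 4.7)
The $(\Leftarrow)$ direction follows immediately from the rules defining $\pTO[\PARSone]$. If $a \hookrightarrow b$, then by construction $a \to \prms{1:b} \in \PARSone$, so the second rule gives $\prms{1:a} \pTO[\PARSone] \prms{1:b}$. If $a$ is $\hookrightarrow$-normal, then by construction of $\PARSone$ no $d$ satisfies $a \to d \in \PARSone$, hence $a \in \Term{\PARSone}$, and the first rule yields $\prms{1:a} \pTO[\PARSone] \varnothing$.

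For the $(\Rightarrow)$ direction I proceed by induction on a derivation $\mathcal{D}$ of $\prms{1:a} \pTO[\PARSone] \mu$. If $\mathcal{D}$ ends with the first rule, then $\mu = \varnothing$ and $a \in \Term{\PARSone}$, and the definition of $\PARSone$ forces $a$ to be $\hookrightarrow$-normal. If $\mathcal{D}$ ends with the second rule, its premise supplies some $a \to d \in \PARSone$ with $\mu = d$; since every distribution appearing in $\PARSone$ has the shape $\prms{1:b}$, we obtain $a \hookrightarrow b$ and $\mu = \prms{1:b}$.

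The delicate case is when $\mathcal{D}$ ends with the convex combination rule, applied to premises $\mu_i \pTO[\PARSone] \rho_i$ with coefficients $p_i \ge 0$ and $\sum_i p_i \le 1$ satisfying $\biguplus_i p_i \cdot \mu_i = \prms{1:a}$ and $\mu = \biguplus_i p_i \cdot \rho_i$. The key observation is that $\prms{1:a}$ is a singleton multiset of total mass exactly $1$: tracing which summand contributes the unique pair $1 : a$, there must be a single index $i_0$ at which this atom is produced, which forces $p_{i_0} = 1$ and $\mu_{i_0} = \prms{1:a}$, while for every $i \neq i_0$ the component $p_i \cdot \mu_i$ must be empty (so $p_i \cdot \rho_i = \varnothing$ as well, since the rules derive only finite multidistributions). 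Consequently $\mu = \rho_{i_0}$, and applying the induction hypothesis to the strictly smaller derivation of $\prms{1:a} \pTO[\PARSone] \rho_{i_0}$ closes the case.

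The main obstacle is exactly this third-rule case: one has to rule out that a convex combination smuggles in behaviour not already captured by the first two rules, which reduces to the mass-and-cardinality argument above. Once this combinatorial fact is established, the proposition follows by plain unfolding of the definitions.
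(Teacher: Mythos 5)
Your proof is correct and takes essentially the same approach as the paper: the paper's own proof is a one-liner asserting that only the first two rules of the definition are effective on $\prms{1:a}$, which is precisely the mass-and-cardinality fact you work out in detail for the convex-combination rule. One minor repair: your parenthetical reason why $p_i \cdot \rho_i = \varnothing$ for $i \neq i_0$ ("the rules derive only finite multidistributions") is not the right justification — what you need is that $\varnothing \pTO[\PARSone] \rho$ forces $\rho = \varnothing$, a fact not covered by your induction hypothesis (which only concerns sources of the form $\prms{1:a}$) but provable by the same cardinality argument as a separate trivial induction.
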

\ifextended
\begin{proof}
For $\prms{1:a} \pTO[\PARSone] \mu$ only the first two rules of Definition~\ref{d:reduction}
are effective. Then the claim directly follows. 
\end{proof}
\fi

\subsection{Notions of Probabilistic Termination}

A binary relation $\to$ is called \emph{terminating} if it does not give rise to an infinite 
sequence $\objone_1 \to \objone_2 \to \dots$. In a probabilistic setting,
infinite sequences are problematic only if they occur with 
\emph{non-null} probability.
\begin{definition}[\AST]
A PARS $\PARSone$ is \emph{almost surely terminating} (\emph{\AST})
 if for any reduction sequence
$\{\mu_i\}_{i\in\N} \in \Seq\mu$,
it holds that
$
\lim_{n\to\infty} \sz{\mdistone_n} = 0
$.
\end{definition}
Intuitively,
$\sz{\mdistone_n}$ is the probability of having $n$-step reducts,
so its tendency towards zero indicates that infinite reductions occur with zero probability.

\begin{example}[Example~\ref{ex:rw} Revisited]
  The system $\Arw[p]$ is \AST for $p \leq \frac{1}{2}$, whereas 
  it is not for $p > \frac{1}{2}$.
  Note that although $\Arw[\half]$ is \AST, the expected number of reductions
  needed to reach a terminal is infinite. 
\end{example}

Let $\PARSone$ be a PARS
and $\vec\mu = \{\mu_i\}_{i\in\N} \in \Seq\mu$.
Following terminology from rewriting, we define the 
\emph{expected derivation length} $\adl(\vec\mu) \in \R \cup \{\infty\}$ of $\vec\mu$ by 
\[
  \adl(\vec\mu) \defsym \sum_{i\ge1} |\mu_i|\ .
\]
Intuitively, this definition is equivalent to taking mean length of 
terminal paths in $\vec\mu$.
The notion of \emph{positive almost sure termination (\PAST)}, 
introduced by Bournez and Garnier~\cite{BG:RTA:05}, 
constitutes a refinement of \AST\ 
demanding that the expected derivation length is finite
for every initial state $a$ and for \emph{every strategy}, i.e.,
for every reduction sequence $\vec\mu$ starting from $a$,
$\adl(\vec\mu)$ is bounded.
Without fixing a strategy, however, this condition does not ensure
bounds on the derivation length.

\begin{example}\label{ex:omega}
  Consider the (non-probabilistic) ARS on $\N \cup \{\omega\}$
  with reductions 
  $\omega \to n$ and $n+1 \to n$ for every $n \in \N$.
  It is easy to see that every reduction sequence is of finite length, 
  and thus, this ARS is \PAST.\@
  There is, however, no global bound on the length of reduction sequences starting from $\omega$.
\end{example}
Hence we introduce a stronger notion, which actually plays a more
essential role than \PAST.\@ It is based on a natural extension of
\emph{derivation height} from complexity analysis of term rewriting.

\begin{definition}[Strong \AST]
A PARS $\PARSone$ is \emph{strongly almost surely terminating} (\emph{\SPAST})
if the \emph{expected derivation height} $\adh[\PARSone](\objone)$
of every $\objone \in \Cone$ is finite,
where $\adh[\PARSone](a) \in \R \cup \{\infty\}$ is defined as
$  \sup_{\vec\mu \in \Seq{\prms{1:a}}} \adl(\vec\mu)$.
\end{definition}
In Example~\ref{ex:omega}, we make essential use of $\omega$ that
admits infinitely many one-step reducts.  Thus the ARS is not finitely
branching, and does not contradict the claims in \cite{BG:RTA:05}.
Nevertheless \PAST and \SPAST does not coincide on finitely branching PARSs.
The following example is found by an anonymous reviewer.
\begin{example}
Consider PARS $\mathcal{A}$ over
$\N \cup \set{a_n \mid n \in \N}$, consisting of
\begin{align*}
a_n &\to \prs{\half: a_{n+1}; \half: 0}&
a_n &\to \prs{1: 2^n\cdot n}&
n+1 &\to \prs{1: n}\tpkt
\end{align*}
Then $P$ is finitely branching and PAST, because every reduction sequence from $\prms{1:a_0}$
is one of the following forms:
\begin{itemize}
\item
$\vec\mu_0 = \prms{1:a_0} \pTO \prms{1:0}$
\item
$\vec\mu_{n} = \prms{1:a_0} \pTO^n \prms{\frac1{2^n}: a_{n}; \frac1{2^n}:0}
\pTO \prms{\frac1{2^n}: 2^{n}\cdot n} \pTO^{2^{n}\cdot n} \prms{\frac1{2^n}: 0}$
with $n = 1, 2, \dots$
\item
$\vec\mu_\infty = \prms{1:a_0} \pTO \prms{\frac1{2}: a_{1}; \frac1{2}:0} \pTO
\prms{\frac1{4}: a_{2};\frac1{4}: 0} \pTO \cdots$
\end{itemize}
and $\adl(\vec\mu_\alpha)$ is finite for each $\alpha \in \N \cup \set{\infty}$.
However, $\adh[\mathcal{A}](a_0)$ is not bounded,
since
$\adl(\vec\mu_{n}) = \frac1{2^0} + \cdots + \frac1{2^{n-1}} + \frac1{2^n} +
 \frac1{2^n}(2^{n}\cdot n) \ge n$.
\end{example}


\subsection{Probabilistic Ranking Functions} 
Bournez and Garnier~\cite{BG:RTA:05} generalized \emph{ranking
  functions}, a popular and classical method for proving termination
of non-probabilistic systems, to PARS. We give here a simpler but
equivalent definition of probabilistic ranking function, taking
advantage of the notion of multidistribution.

For a (multi)distribution $\mu$ over real numbers,
the \emph{expected value} of $\mu$ is denoted by
$\E(\mu) \defsym \sum_{\pr{p}{x}\in \mu} p \cdot x$.
A function $f : A \to \Real$ is naturally generalized to
$f : \FMDist{A} \to \FMDist{\Real}$,
so for $\mu \in \FMDist{A}$, $\E(f(\mu)) = \sum_{\pr{p}{x} \in \mu} p \cdot f(x)$.
For $\epsilon > 0$ we define the order $>_\epsilon$ on $\Real$ by 
$x >_\epsilon y$ iff $x \geq \epsilon + y$.

\begin{definition}\label{d:lyapunov}
  Given a PARS $\PARSone$ on $\Cone$,
  we say that a function $f : \Cone \to \Realpos$ is a
  \emph{(probabilistic) ranking function}
(sometimes referred to as \emph{Lyapunov} ranking function),
   if there exists $\epsilon > 0$ such that
   $\objone \to \distone \in \PARSone$ implies
   $f(\objone) >_\epsilon \E(f(\distone))$.
\end{definition}

The above definition slightly differs from the formulation in~\cite{BG:RTA:05}:
the latter demands
the \emph{drift} $\E(f(d)) - f(a)$ to at least $-\epsilon$,
which is equivalent to $f(a) >_\epsilon \E(f(d))$;
and allows any lower bound $\inf_{a \in A}f(a) > -\infty$,
which can be easily turned into $0$ by adding the lower bound to the ranking function.

We prove that a ranking function ensures
\SPAST and gives a bound on expected derivation length.
Essentially the same result can be found in~\cite{CFG:CAV:16},
but we use only elementary mathematics not requiring notions from probability theory.
We moreover show that this method is complete for proving \SPAST.

\begin{lemma}\label{l:lyapunov:step}
  Let $f$ be a ranking function for a PARS $\PARSone$.
  Then there exists $\epsilon > 0$ such that
$\E(f(\mdistone)) \geq \E(f(\mdisttwo)) + \epsilon \cdot \sz{\mdisttwo}$
whenever $\mdistone \pTO[\PARSone] \mdisttwo$.
\end{lemma}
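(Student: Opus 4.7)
The plan is to proceed by induction on the derivation of $\mdistone \pTO[\PARSone] \mdisttwo$ according to the three inference rules in Definition~\ref{d:reduction}, using as the witness $\epsilon$ the very constant supplied by the hypothesis that $f$ is a ranking function.

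For the base cases, consider first a terminal step $\prms{1:a} \pTO[\PARSone] \varnothing$. Here $\E(f(\mdisttwo)) = 0$ and $\sz{\mdisttwo} = 0$, so the inequality $\E(f(\mdistone)) = f(a) \ge 0$ holds because $f$ takes values in $\Realpos$. For a one-step probabilistic reduction $\prms{1:a} \pTO[\PARSone] d$ with $a \to d \in \PARSone$, the defining property of a ranking function gives $f(a) \ge \E(f(d)) + \epsilon$; since $d$ is a genuine distribution we have $\sz{d} = 1$, and the claim follows.

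For the inductive step, suppose $\mdistone = \biguplus_{i=1}^n p_i \cdot \mu_i$ and $\mdisttwo = \biguplus_{i=1}^n p_i \cdot \rho_i$, with $\mu_i \pTO[\PARSone] \rho_i$ each satisfying the induction hypothesis $\E(f(\mu_i)) \ge \E(f(\rho_i)) + \epsilon \cdot \sz{\rho_i}$. Linearity of expectation and of the size functional yield
\begin{align*}
\E(f(\mdistone)) &= \sum_{i=1}^n p_i \cdot \E(f(\mu_i))\\
 &\ge \sum_{i=1}^n p_i \bigl(\E(f(\rho_i)) + \epsilon \cdot \sz{\rho_i}\bigr)\\
 &= \E(f(\mdisttwo)) + \epsilon \cdot \sz{\mdisttwo},
\end{align*}
as desired.

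There is essentially no hard step here: the proof is a direct structural induction whose only moving parts are (i) checking that the linearity of $\E(f(\cdot))$ and of $\sz{\cdot}$ under convex multiset unions matches the shape of the statement, and (ii) being careful that the same $\epsilon$ (the one provided by Definition~\ref{d:lyapunov}) works uniformly across all three rules. The terminal rule is the only place where positivity of $f$ is used, and it is used only to handle the ``missing mass'' cleanly.
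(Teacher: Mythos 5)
Your proof is correct and follows essentially the same route as the paper's: a structural induction on the derivation of $\mdistone \pTO[\PARSone] \mdisttwo$ with the ranking function's $\epsilon$ used uniformly, handling the terminal rule via positivity of $f$, the single-step rule via the ranking condition together with $\sz{\mdisttwo}=1$, and the convex-union rule via linearity. No gaps; nothing further to add.
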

\begin{proof}
  As $f$ is a ranking function for $\PARSone$,
  we have $\epsilon > 0$ such that
  $\objone \to \distone \in \PARSone$ implies
  $f(\objone) >_\epsilon \E(f(\distone))$.
  Consider $\mdistone \pTO[\PARSone] \mdisttwo$. 
  We prove the claim by induction on the derivation of 
  $\mdistone \pTO[\PARSone] \mdisttwo$.
  \begin{itemize}
  \item Suppose $\mdistone = \prms{1:a}$ and $a\in\Term{\PARSone}$.
    Then $\mdisttwo = \varnothing$ and 
    $\E(f(\mdistone)) \geq 0 = \E(f(\mdisttwo)) + \epsilon \cdot \sz{\mdisttwo}$ since
    $\E(f(\varnothing)) = \sz{\varnothing} = 0$. 
  \item Suppose $\mdistone = \prms{1:\objone}$ and $\objone \to \mdisttwo \in \PARSone$.
    From the assumption
    $\E(f(\mu)) = f(\objone) >_\epsilon \E(f(\mdisttwo))$, and
    as $\sz{\mdisttwo} = 1$ we conclude
    $\E(f(\mdistone)) \ge \E(f(\mdisttwo)) + \epsilon \cdot \sz{\mdisttwo}$.
  \item Suppose $\mdistone = \biguplus_{i=1}^n p_i \cdot \mdistone_i$,
    $\mdisttwo = \biguplus_{i = 1}^n p_i \cdot \mdisttwo_i$, and
    $\mdistone_i \pTO[\PARSone] \mdisttwo_i$ for all $1 \le i \le n$. 
    Induction hypothesis gives
    $\E(f(\mdistone_i)) \geq \E(f(\mdisttwo_i)) + \epsilon \cdot \sz{\mdisttwo_i}$. 
    Thus,
    \begin{multline*}
      \E(f(\mdistone))  = \sum_{i=1}^n p_i \cdot \E(f(\mdistone_i))
       \geq \sum_{i=1}^n p_i \cdot (\E(f(\mdisttwo_i)) + \epsilon \cdot \sz{\mdisttwo_i}) \\
       = \sum_{i=1}^n p_i \cdot \E(f(\mdisttwo_i)) +
         \epsilon \cdot \sum_{i=1}^n p_i \cdot \sz{\mdisttwo_i} 
       = \E(f(\mdisttwo)) + \epsilon \cdot \sz{\mdisttwo} \tpkt
    \tag*{\qed}
    \end{multline*}
  \end{itemize}
\end{proof}

\begin{lemma}\label{l:lyapunov:finseq}
  Let $f$ be a ranking function for PARS $\PARSone$.
  Then there is $\epsilon > 0$ such that 
  $\E(f(\mdistone_0)) \geq \epsilon \cdot \adl(\vec\mu)$
  for every $\vec\mu = \{\mu_i\}_{i\in\N} \in \Seq{\mdistone_0}$. 
\end{lemma}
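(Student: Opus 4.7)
The plan is to reuse the single-step drift inequality from Lemma~\ref{l:lyapunov:step} and iterate it along the reduction sequence, exploiting non-negativity of $f$ to discard the tail.

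First I would fix the $\epsilon > 0$ provided by Lemma~\ref{l:lyapunov:step}, so that $\E(f(\mu)) \geq \E(f(\rho)) + \epsilon \cdot \sz{\rho}$ for every step $\mu \pTO[\PARSone] \rho$. Then I would prove by induction on $n \in \N$ that
\[
  \E(f(\mu_0)) \;\geq\; \E(f(\mu_n)) + \epsilon \cdot \sum_{i=1}^{n} \sz{\mu_i}
\]
for every prefix $\mu_0 \pTO[\PARSone] \mu_1 \pTO[\PARSone] \cdots \pTO[\PARSone] \mu_n$ of $\vec\mu$. The base case $n = 0$ is trivial, and the inductive step follows by applying Lemma~\ref{l:lyapunov:step} to $\mu_n \pTO[\PARSone] \mu_{n+1}$ and adding the induction hypothesis.

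Next, since $f$ takes values in $\Realpos$, we have $\E(f(\mu_n)) \geq 0$, hence
\[
  \E(f(\mu_0)) \;\geq\; \epsilon \cdot \sum_{i=1}^{n} \sz{\mu_i}
\]
for every $n$. The right-hand side is a monotone non-decreasing sequence in $n$ (all $\sz{\mu_i} \geq 0$), so it converges in $\Real \cup \{\infty\}$ to $\epsilon \cdot \sum_{i \geq 1} \sz{\mu_i} = \epsilon \cdot \adl(\vec\mu)$. Passing to the limit on the right gives the desired bound $\E(f(\mu_0)) \geq \epsilon \cdot \adl(\vec\mu)$.

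There is no real obstacle here: the argument is a standard telescoping combined with the non-negativity of $f$. The only mild care required is in the limit step, where one must note that both sides lie in $\Real \cup \{\infty\}$ and that the monotonicity of the partial sums makes the limit well-defined without any further integrability assumption, so the inequality is preserved in the limit.
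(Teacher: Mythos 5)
Your proof is correct and follows essentially the same route as the paper: both iterate the one-step inequality of Lemma~\ref{l:lyapunov:step} to obtain a telescoped bound on the partial sums $\epsilon\cdot\sum_{i=1}^{n}\sz{\mu_i}$, use non-negativity of $f$ to discard the remaining $\E(f(\mu_n))$ term, and pass to the limit via monotonicity of the partial sums. The only cosmetic difference is that the paper runs the induction from a general index $m$ down to $0$ (absorbing non-negativity into the base case), whereas you induct forward from $0$ and drop $\E(f(\mu_n))$ at the end.
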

\begin{proof}
  We first show
  $\E(f(\mdistone_m)) \geq \sum_{i=m+1}^n \sz{\mdistone_i}$ for every $n \geq m$, 
  by induction on $m - n$. 
  Let $\epsilon$ be given by Lemma~\ref{l:lyapunov:step}.
  The base case is trivial, so let us consider the inductive step. 
  By Lemma~\ref{l:lyapunov:step} and induction hypothesis we get
  \begin{align*}
    \E(f(\mdistone_m)) &\geq \E(f(\mdistone_{m+1})) + \epsilon \cdot \sz{\mdistone_{m+1}}
    \\
    &\geq \epsilon \cdot \sum_{i=m+2}^n \sz{\mdistone_i} + \epsilon \cdot \sz{\mdistone_{m+1}}
    = \epsilon \cdot \sum_{i=m+1}^n \sz{\mdistone_i} \tpkt
  \end{align*}
  By fixing $m=0$, we conclude that
  the sequence $\bigl\{\epsilon \cdot \sum_{i=1}^n\sz{\mu_i}\bigr\}_{n\ge1}$ is bounded by
  $\E(f(\mu_0))$,
  and so is its limit $\epsilon \cdot \sum_{i\ge1}\sz{\mu_i} = \epsilon \cdot \adl(\vec\mu)$.
  \qed
\end{proof}

\begin{theorem}\label{t:lyapunov:sound}
  Ranking functions are sound and complete for proving \SPAST.
\end{theorem}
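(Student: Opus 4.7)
The theorem splits into two directions, which I will prove separately.

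For \emph{soundness}, Lemma~\ref{l:lyapunov:finseq} already does almost all the work. Given a ranking function $f$, for any $a \in A$ and any $\vec\mu \in \Seq{\prms{1:a}}$, the lemma yields $\adl(\vec\mu) \leq \E(f(\prms{1:a}))/\epsilon = f(a)/\epsilon$. Taking the supremum over $\vec\mu$ gives $\adh(a) \leq f(a)/\epsilon$, and since $f(a) \in \Realpos$, this is finite, so $\PARSone$ is \SPAST.

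For \emph{completeness}, assume $\PARSone$ is \SPAST, so $\adh(a) \in \Realpos$ for every $a \in A$. The natural candidate is $f := \adh$, and I claim this is a ranking function with $\epsilon = 1$. Fix $a \to d \in \PARSone$ and write $d = \prs{p_1:b_1, \dots; p_n:b_n}$; the goal is $\adh(a) \geq 1 + \sum_{i=1}^n p_i \cdot \adh(b_i)$. The construction I have in mind is a composition of near-optimal reduction sequences. Given $\delta > 0$, pick for each $i$ some $\vec\sigma^i = \{\sigma^i_k\}_{k \in \N} \in \Seq{\prms{1:b_i}}$ with $\adl(\vec\sigma^i) > \adh(b_i) - \delta$; such a choice exists precisely because $\adh(b_i) < \infty$. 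Form the combined sequence $\rho_0 := d$ and $\rho_k := \biguplus_{i=1}^n p_i \cdot \sigma^i_k$ for $k \geq 1$. Since $\sigma^i_k \pTO[\PARSone] \sigma^i_{k+1}$ for each $i$ and $\sum_i p_i = 1$, the third rule of Definition~\ref{d:reduction} gives $\rho_k \pTO[\PARSone] \rho_{k+1}$. Swapping the two sums of non-negative terms,
\[
  \adl(\vec\rho) = \sum_{k \geq 1} \sz{\rho_k} = \sum_{k \geq 1} \sum_{i=1}^n p_i \cdot \sz{\sigma^i_k} = \sum_{i=1}^n p_i \cdot \adl(\vec\sigma^i).
\]
Prepending the step $\prms{1:a} \pTO[\PARSone] d$ (justified by $a \to d \in \PARSone$) yields a reduction sequence in $\Seq{\prms{1:a}}$ of length $1 + \adl(\vec\rho)$, hence $\adh(a) \geq 1 + \sum_i p_i (\adh(b_i) - \delta)$. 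Letting $\delta \to 0$ gives $\adh(a) \geq 1 + \E(\adh(d))$, i.e., $\adh(a) >_1 \E(\adh(d))$.

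The main obstacle is the composition step, where three things must line up: the third rule of Definition~\ref{d:reduction} applies at every index (easy, since $\sum p_i = 1$); the double-sum swap is legitimate (routine for series of non-negative terms); and the suprema in $\adh$ need not be attained, forcing an $\epsilon$-$\delta$ argument. The \SPAST hypothesis is used exactly here, to ensure $\adh(b_i)$ is finite so that sequences beating $\adh(b_i) - \delta$ can be selected for arbitrarily small $\delta$. Once this is in place, the two directions close up neatly, and moreover the soundness argument shows that any ranking function $f$ upper-bounds $\adh$ up to the factor $1/\epsilon$, giving the promised bound on expected runtimes.
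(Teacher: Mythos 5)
Your proof is correct and follows essentially the same route as the paper's: soundness via Lemma~\ref{l:lyapunov:finseq}, and completeness by showing that $\adh[\PARSone]$ itself is a ranking function with $\epsilon = 1$. Your $\delta$-argument composing near-optimal sequences merely makes explicit the identity $\sup_{\vec\mu \in \Seq{d}} \adl(\vec\mu) = \E(\adh[\PARSone](d))$ that the paper asserts without further justification.
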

\begin{proof}
  For soundness,
  let $f$ be a ranking function for a PARS $\PARSone$.
  For every derivation
  $\vec\mu$ starting from $\prms{1:a}$,
  we have
  $\adl(\vec\mu) \leq \frac{f(a)}{\epsilon}$
  by Lemma~\ref{l:lyapunov:finseq}.
  Hence,
  $\adh[\PARSone](a) \leq \frac{f(a)}{\epsilon}$, concluding that $\PARSone$ is \SPAST.
  
  For completeness, suppose that $\PARSone$ is \SPAST, and let $a \to d \in \PARSone$.
  Then we have $\adh[\PARSone](\objone) \in \Real$, and
\begin{align*}
\adh[\PARSone](a)\ =
\sup_{
 \vec\mu \in \Seq{\prms{1:a}}
}\adl(\vec\mu)\ 
&\ge \sup_{
 \vec\mu \in \Seq{d}
}
 (1 + \adl(\vec\mu))\\
&=\ 1 + \sup_{\vec\mu \in \Seq{d}} \adl(\vec\mu)\ \
 =\ 1 + \E(\adh[\PARSone](d))
  \tkom
\end{align*}
concluding $\adh[\PARSone](a) >_1 \E(\adh[\PARSone](d))$.
Thus, taking $\epsilon = 1$, $\adh[\PARSone]$ is a ranking function according to 
  Definition~\ref{d:lyapunov}.
  \qed
\end{proof}


\subsection{Relation to Formulation by Bournez and Garnier}
\renewcommand\tterm[1][\vec{X}]{T_{#1}}
As done by Bournez and Garnier~\cite{BG:RTA:05}, the dynamics of probabilistic systems are commonly defined as stochastic sequences, i.e., 
infinite sequences of random variables whose $n$-th variable represents the $n$-th reduct. 
A disadvantage of this approach is that
nondeterministic choices have to be \emph{a priori}
resolved by means of strategies,
making the rewriting-based understanding of nondeterminism inapplicable.
We now relate this formulation 
to ours, and see that the corresponding notions of \AST and \PAST
coincide.

We shortly recap central definitions from \cite{BG:RTA:05}.
We assume basic familiarity with stochastic processes, see e.g. \cite{Bremaud:99}. 
Here we fix a PARS $\PARSone$ on $\Cone$. 
A \emph{history} (of length $n+1$) is a finite sequence
$\vec{\objone} = \objone_0,\objone_1,\dots,\objone_n$ of objects from $\Cone$, and such a sequence is called
\emph{terminal} if $\objone_n$ is. 
A \emph{strategy} $\phi$ is a function from nonterminal histories to distributions such that
$\objone_n \to \phi(\objone_0,\objone_1,\dots,\objone_n) \in \PARSone$.  
A history $\objone_0,\objone_1,\dots,\objone_n$ is called
\emph{realizable under $\phi$} iff for every $0\leq i<n$, it holds that
$\phi(\objone_0,\objone_1,\dots,\objone_i)(\objone_{i+1})>0$. 

\begin{definition}[Stochastic Reduction,~\cite{BG:RTA:05}]
Let $\PARSone$ be a PARS on $A$ and $\bot\notin \Cone$ a special symbol.
A sequence of random variables
$\RV=\{\rv_n\}_{n\in\N}$ over $\Cone\cup\{\bot\}$
is a \emph{(stochastic) reduction} 
in $\PARSone$
(under strategy $\phi$)
if 
\begin{align*}
  \prob{\rv_{n+1}=\bot\mid\rv_n=\bot}&=1;\\
  \prob{\rv_{n+1}=\bot\mid\rv_n=\objone}&=1&&\text{if $\objone$ is terminal;}\\
  \prob{\rv_{n+1}=\bot\mid\rv_n=\objone}&=0&&\text{if $\objone$ is nonterminal;}\\
  \prob{\rv_{n+1}=\objone \mid\rv_n=\objone_n,\dots,\rv_0=\objone_0}&=\distone(\objone)
  &&\text{if $\phi(\objone_0,\dots,\objone_n)=\distone$,}
\end{align*}
where $\objone_0,\dots,\objone_n$ is a realizable nonterminal history under $\phi$.
\end{definition}
\ifextended
Notice that as an immediate consequence of the law of total probability we obtain
\begin{equation}
  \label{eq:pi_n}
  \prob{\rv_{n}=\objone_n} 
  = \sum_{\mathclap{\seq[0][n]{\objone} \in \Cone}} \prob{\rv_0=\objone_0,\dots,\rv_n=\objone_n}  \tpkt
\end{equation}
\fi
Thus, $\RV$ is set up so that trajectories correspond to 
reductions $\objone_0 \to_\PARSone \objone_1 \to_\PARSone \cdots$, 
and 
$\bot$ signals termination.
In correspondence, the derivation length is given by the \emph{first hitting time} to $\bot$:

\begin{definition}[AST/\PAST of~\cite{BG:RTA:05}]
  For $\RV =\{\rv_n\}_{n\in\N}$
  define the random variable $\tterm \defsym \min\{n \in \N \mid \rv_n = \bot\}$, where $\min\varnothing = \infty$ by convention.
  A PARS $\PARSone$ is \emph{stochastically AST} (resp.\ PAST) if
  for every stochastic reduction $\RV$ in $\PARSone$,
  $\prob{\tterm = \infty} = 0$
  (resp.\ $\E(\tterm)
  < \infty$).
\end{definition}

\ifextended
\else
In the extended version we prove that 
probabilistic reductions on multidistributions
are in a one-to-one correspondence with stochastic reductions.
\begin{lemma}
  For each stochastic reduction $\{\rv_n\}_{n\in\N}$ in a PARS $\PARSone$ there
  exists a corresponding reduction sequence
  $\mdistone_0 \pTO[\PARSone] \mdistone_1 \pTO[\PARSone] \cdots$
  where $\mu_0$ is a distribution 
  and $\prob{\rv_n = a} = \sum_{p:a \in \mu_n} p$ for all $n \in \N$ and $a \in \Cone$,
  and vice versa.
\end{lemma}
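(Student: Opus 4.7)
The plan is to prove the two directions by setting up an explicit bridge between realizable histories under a strategy and individual entries of the multidistributions, since both keep track of "which path" the computation has taken through the system.

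For the direction from stochastic reductions to multidistribution reductions, I would, given $\{X_n\}_{n\in\N}$ with strategy $\phi$ and initial distribution $\mu_0(a) \defsym \prob{X_0 = a}$, define
\[
 \mu_n \defsym \biguplus_{\vec{a}} \prob{X_0=a_0,\dots,X_n=a_n} : a_n,
\]
where $\vec{a} = a_0,\dots,a_n$ ranges over realizable \emph{nonterminal} histories under $\phi$ with strictly positive probability. First I would check that $\mu_0$ is a distribution and that $\sz{\mu_n} = \prob{X_n \neq \bot}$, so each $\mu_n$ is a well-formed multidistribution. Next I would verify $\mu_n \pTO[\PARSone] \mu_{n+1}$ by decomposing $\mu_n$ as the convex union of its one-history summands $p(\vec{a}) \cdot \prms{1:a_n}$, applying the first rule of Definition~\ref{d:reduction} when $a_n$ is terminal and the second rule with $a_n \to \phi(\vec{a}) \in \PARSone$ otherwise, and then combining with the third (convex-union) rule. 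Finally the marginalization identity $\prob{X_n = a} = \sum_{p:a \in \mu_n} p$ follows by the law of total probability over histories ending in $a$.

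For the converse direction I would proceed by induction on $n$, exploiting the fact that every derivation of $\mu_n \pTO[\PARSone] \mu_{n+1}$ records, for each entry $p:a$ of $\mu_n$, a \emph{choice} of either the terminal rule (if $a \in \Term{\PARSone}$) or a specific probabilistic reduction $a \to d_a \in \PARSone$ together with the sub-multidistribution $p \cdot d_a$ of $\mu_{n+1}$ it produces. These choices, indexed by the history leading to a given entry, yield a strategy $\phi$ on realizable histories. The random variables $X_n$ are then defined on the canonical probability space of infinite sequences $(a_0,a_1,\dots) \in (A \cup \{\bot\})^\N$ by assigning to each realizable nonterminal history $a_0,\dots,a_n$ the probability dictated by $\mu_0$ and $\phi$, with a sink at $\bot$. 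One checks that with this definition the identities of Definition stochastic reduction hold and that the resulting multidistributions agree with the given sequence by the same inductive unfolding used in the first direction.

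The main technical obstacle will be the converse direction, precisely because a multidistribution can contain several copies of the same element (as in Example~\ref{ex:nd}, where $\prms{\half:\fun{c},\half:\fun{c}}$ must not be collapsed). A strategy is a function of the history, so different copies of $a$ are allowed to take different rules only because they correspond to different histories; conversely, when reading a multidistribution reduction one must consistently attach to each entry the unique history that gave rise to it. I would handle this by defining the "history labelling" as an invariant maintained through the induction: each entry $p : a$ of $\mu_n$ carries an implicit tag (the history from which it descends), and the derivation of $\mu_n \pTO \mu_{n+1}$ respects this tagging because of its inductive structure in Definition~\ref{d:reduction}. Once this tagging is formalized, the correspondence between reductions in the two frameworks and the equality of probabilities become book-keeping.
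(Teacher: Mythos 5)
Your overall strategy coincides with the paper's: for the forward direction, build a sequence of multidistributions indexed by \emph{realizable histories}, project each history to its last element, and verify $\mu_n \pTO[\PARSone] \mu_{n+1}$ by decomposing into single-history summands, applying the terminal rule or $a_n \to \phi(a_0,\dots,a_n)$, and recombining with the convex-union rule; for the converse, tag entries with their generating histories, read a strategy off the rule choices recorded in the derivation, and compare probabilities by induction. However, one step fails as written: in the forward direction you let $\vec a$ range over realizable \emph{nonterminal} histories only. With that restriction the required identity $\prob{\rv_n=a}=\sum_{p:a\in\mu_n}p$ is false for terminal $a$ (take $b \to \prs{\half:a;\half:c}$ with $a$ terminal: then $\prob{\rv_1=a}=\half$ but $a$ does not occur in your $\mu_1$), your claim $\sz{\mu_n}=\prob{\rv_n\neq\bot}$ fails, and the step $\mu_n\pTO[\PARSone]\mu_{n+1}$ itself fails, because reducing the nonterminal entries of $\mu_n$ necessarily produces the terminal one-step reducts with positive weight, which your $\mu_{n+1}$ omits. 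The paper instead collects \emph{all} realizable histories of positive probability into $\mu_n$; terminal entries then appear in $\mu_n$ and are erased only at the next step via $\prms{1:a}\pTO[\PARSone]\varnothing$, which is exactly what makes both the marginal identity and the reduction step go through.

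Two smaller remarks. First, for the $\mu_n$ to be legitimate (finite) multidistributions one needs $\rv_0$ to be finitely supported and one uses that $\Supp{d}$ is finite for every rule $a \to d \in \PARSone$; the paper makes the former an explicit hypothesis and argues the latter inductively, whereas your draft never addresses finiteness. Second, in the converse direction the ``implicit tag'' you describe is precisely the paper's multidistribution over histories, and the invariant that must actually be proved (the paper isolates it as a separate lemma) is that distinct entries of $\mu_n$ carry pairwise distinct history tags, since otherwise the extracted strategy is not a well-defined function of histories. You correctly identify this as the crux, so filling it in is book-keeping rather than a new idea, but as it stands it is asserted, not established.
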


As the above lemma relates $\tterm$ with the $n$-th reduction $\mu_n$ of 
the corresponding reduction so that 
$\prob{\tterm \ge n} = \prob{\rv_n \neq \bot} = \sz{\mu_n}$, 
using that $\E(\tterm) = \sum_{n \in \N \cup \{\infty\}} \prob{\tterm \geq n}$ \cite{Bremaud:99}, 
it is not difficult to derive the central result of this section:
\fi
\ifextended
We will now see that stochastic AST/\PAST\ coincides with AST/\PAST.
To this end, we first clarify the correspondence of stochastic reductions and reductions over multidistributions. 
The quintessence of this correspondence is that any stochastic derivation $\RV = \{\rv_n\}_{n \in \N}$
translates to a reduction $\mdistone_0 \pTO[\PARSone] \mdistone_1 \pTO[\PARSone] \mdistone_2 \pTO[\PARSone] \dots $
so that the probabilities $p = \prob{\rv_0=\objone_0,\dots,\rv_n=\objone_n} > 0$ of realisable 
histories $\seq[0][n]{\objone}$ in $\RV$ are recorded in $\mdistone_n$, i.e.,  $(p : \objone_n) \in \mdistone_n$.
Equation~\eqref{eq:pi_n} then gives the correspondence between $\rv_n$ and $\mdistone_n$. 
With Lemmas~\ref{l:stor} and~\ref{l:rtos}, we make this correspondence precise.
Guided by \eqref{eq:pi_n}, we associate the multidistribution $\mdistone$ over $\Cone$ with the distribution over $\Cone \cup \{\bot\}$ 
such that $\clps{\mdistone}(\objone) \defsym \sum_{\pr{p}{\objone} \in \mdistone} p$ for all $\objone \in \Cone$, and 
$\clps{\mdistone}(\bot) \defsym 1 - \sz{\mdistone}$.

\paragraph*{Stochastic Reduction to Multidistribution Reduction.}
Here we fix a stochastic reduction $\RV$ in a PARS $\PARSone$. 
We show that $\RV$ corresponds to a multidistribution reduction,
assuming $X_0$ is finitely supported:
$\prob{\rv_0 = a} > 0$ for only finitely many $a$.

\newcommand{\hdist}[1]{\mdisttwo_{#1}}
For each $n \in \N$, we define the multidistribution $\hdist{n}$ over realizable histories of length $n$ by
\[
  \hdist{n} \defsym \mset{p : (\seq[0][n]{\objone}) \mid p = \prob{\rv_0=a_0,\dots,\rv_n=a_n} > 0} 
  \tpkt
\]

Note that $\hdist0$ is well defined since $\rv_0$ is finitely supported.
Then we can inductively show that $\hdist{n}$ is well defined, using the fact that
$\Supp{d}$ is finite for every $a \to d \in \PARSone$.
%

The following lemma clarifies then how the multidistributions $\hdist{n}$ evolve.
\begin{lemma}\label{l:ds}
  For each $n \in \N$, we have
  \[
    \hdist{n+1} =  
    \biguplus_{\pr{p}{(\seq[0][n]{\objone})} \in \hdist{n}, \objone_n \notin \Term{\PARSone}}
    \begin{array}{l}
      \mopen p \cdot \phi(a_0,\dots,a_n)(a_{n+1}) : (a_0,\dots,a_n,a_{n+1}) \\
      {}\mid \phi(a_0,\dots,a_n)(a_{n+1}) > 0 \mclose
    \end{array}
    \tpkt
  \]
\end{lemma}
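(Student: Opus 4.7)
The plan is to unfold the definition of $\hdist{n+1}$ and apply the chain rule of conditional probability to peel off the last variable $\rv_{n+1}$, reducing it to $\hdist{n}$ weighted by the one-step transition probabilities.

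First, I would fix a realizable history of length $n+2$ whose joint probability is positive, say $(a_0, \dots, a_{n+1})$ with $p' \defsym \prob{\rv_0=a_0,\dots,\rv_{n+1}=a_{n+1}} > 0$. By the multiplicative form of conditional probability,
\[
 p' \;=\; \prob{\rv_0=a_0,\dots,\rv_n=a_n}\cdot\prob{\rv_{n+1}=a_{n+1}\mid \rv_0=a_0,\dots,\rv_n=a_n}.
\]
For $p'$ to be positive, the first factor must be positive, so the prefix $(a_0,\dots,a_n)$ is realizable with some probability $p > 0$, hence $(p:(a_0,\dots,a_n))\in\hdist{n}$. Moreover $a_n$ cannot be terminal: otherwise the defining equations of a stochastic reduction force $\prob{\rv_{n+1}=\bot\mid\rv_n=a_n}=1$ and $\prob{\rv_{n+1}=a_{n+1}\mid\rv_n=a_n}=0$ for any $a_{n+1}\in\Cone$, contradicting $p' > 0$. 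With $a_n$ nonterminal and $(a_0,\dots,a_n)$ realizable, the defining clause of a stochastic reduction yields the conditional probability $\phi(a_0,\dots,a_n)(a_{n+1})$, which must therefore be strictly positive. Thus $p' = p\cdot\phi(a_0,\dots,a_n)(a_{n+1})$, and $(p':(a_0,\dots,a_{n+1}))$ occurs in the multiset on the right-hand side.

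For the converse inclusion I would take any contribution $p\cdot\phi(a_0,\dots,a_n)(a_{n+1}):(a_0,\dots,a_{n+1})$ appearing in the right-hand side; by construction $p > 0$, $a_n\notin\Term{\PARSone}$, and $\phi(a_0,\dots,a_n)(a_{n+1}) > 0$. The same conditional-probability identity shows $\prob{\rv_0=a_0,\dots,\rv_{n+1}=a_{n+1}} = p\cdot\phi(a_0,\dots,a_n)(a_{n+1}) > 0$, so this entry belongs to $\hdist{n+1}$.

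To conclude I would check that the two multisets match not merely as sets but with multiplicities: each element of $\hdist{n+1}$ arises from a unique pair consisting of its length-$(n+1)$ prefix (an element of $\hdist{n}$, with $a_n$ nonterminal) and its final component $a_{n+1}$, so the disjoint union on the right produces no spurious duplicates. The main obstacle is purely bookkeeping, namely being careful about the two boundary cases, the terminal prefixes (which are correctly excluded by the side condition $a_n\notin\Term{\PARSone}$) and the successors $a_{n+1}$ with $\phi(a_0,\dots,a_n)(a_{n+1})=0$ (excluded by the inner side condition), so that no history of zero probability is accidentally counted and every history of positive probability is produced exactly once.
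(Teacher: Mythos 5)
Your proof is correct and follows essentially the same route as the paper's: both reduce the claim to the single identity $\prob{\rv_0=a_0,\dots,\rv_{n+1}=a_{n+1}} = \phi(a_0,\dots,a_n)(a_{n+1})\cdot\prob{\rv_0=a_0,\dots,\rv_n=a_n}$ obtained from the definition of conditional probability together with the defining clauses of a stochastic reduction. You spell out the two inclusions, the exclusion of terminal prefixes, and the multiplicity bookkeeping more explicitly than the paper does, but the underlying argument is the same.
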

\begin{proof}
  Fix a realizable history $(\seq[0][n]{\objone})$ and let $\distone = \phi(\seq[0][n]{\objone})$. 
  Notice that $(\seq[0][n]{\objone},\objone_{n+1})$ is realizable if $\distone(\objone_{n+1}) > 0$.
  Recall $\prob{ \rv_{n+1}=a_{n+1} \mid \rv_0=a_0,\dots,\rv_n=a_n} = \distone(a_{n+1})$. 
  By definition of conditional probability we thus have
  \[
    \prob{\rv_0=a_0,\dots,\rv_{n+1}=a_{n+1}}
    = \distone(\objone_{n+1})
    \cdot \prob{\rv_0=a_0,\dots,\rv_n=a_n} 
    \tkom
  \]
  as desired.
  \qed
\end{proof}

\begin{lemma}\label{l:stor}
  There exist multidistributions $\mdistone_0, \mu_1, \dots$ such that 
  for all $n \in \N$,
  $\clps{\mdistone_n}(\objone) = \prob{\rv_n=\objone}$ and $\mdistone_n \pTO[\PARSone] \mdistone_{n+1}$.
\end{lemma}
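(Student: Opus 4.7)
The plan is to define each $\mdistone_n$ as the projection of the history multidistribution $\hdist{n}$ onto its last coordinate, namely
\[
\mdistone_n \defsym \mset{p : \objone_n \mid p : (\objone_0,\dots,\objone_n) \in \hdist{n}} \tpkt
\]
Two things then need to be verified: (i) the pointwise correspondence $\clps{\mdistone_n}(\objone) = \prob{\rv_n = \objone}$ for every $\objone \in \Cone$, and (ii) the one-step reduction $\mdistone_n \pTO[\PARSone] \mdistone_{n+1}$.

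For (i), by definition of $\clps{\cdot}$ and by collecting pairs in $\hdist{n}$ whose last component equals $\objone$,
\[
\clps{\mdistone_n}(\objone) = \sum_{\substack{p : (\objone_0,\dots,\objone_n) \in \hdist{n} \\ \objone_n = \objone}} p \tpkt
\]
By the definition of $\hdist{n}$ in the excerpt this is exactly $\sum_{\seq[0][n-1]{\objone}} \prob{\rv_0=\objone_0,\dots,\rv_{n-1}=\objone_{n-1},\rv_n=\objone}$, which by equation~\eqref{eq:pi_n} equals $\prob{\rv_n = \objone}$. The value at $\bot$ then follows from $\clps{\mdistone_n}(\bot) = 1 - \sz{\mdistone_n}$ together with $\sum_{\objone \in \Cone} \prob{\rv_n = \objone} + \prob{\rv_n = \bot} = 1$.

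For (ii), the key observation is that $\mdistone_n$ can be written as the convex multiset union
\[
\mdistone_n = \biguplus_{p : (\objone_0,\dots,\objone_n) \in \hdist{n}} p \cdot \prms{1 : \objone_n} \tpkt
\]
This is a valid convex combination because $\sum_{p : (\objone_0,\dots,\objone_n) \in \hdist{n}} p = \sz{\hdist{n}} \leq 1$. For each summand, I apply one of the first two rules of Definition~\ref{d:reduction}: if $\objone_n \in \Term{\PARSone}$, then $\prms{1 : \objone_n} \pTO[\PARSone] \varnothing$; otherwise, setting $\distone = \phi(\objone_0,\dots,\objone_n)$, we have $\objone_n \to \distone \in \PARSone$ by definition of a strategy, so $\prms{1 : \objone_n} \pTO[\PARSone] \distone$. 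Closing under the convex union rule then yields
\[
\mdistone_n \pTO[\PARSone] \biguplus_{\substack{p : (\objone_0,\dots,\objone_n) \in \hdist{n} \\ \objone_n \notin \Term{\PARSone}}} p \cdot \phi(\objone_0,\dots,\objone_n) \tkom
\]
and a direct comparison with Lemma~\ref{l:ds} shows that the right-hand side is precisely $\mdistone_{n+1}$ (after projecting onto the last coordinate of each extended history).

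The main obstacle is purely bookkeeping: one has to ensure that the decomposition of $\mdistone_n$ into the weighted singletons matches up, history by history, with the enumeration used in Lemma~\ref{l:ds}, and that terminal histories genuinely contribute $\varnothing$ on the multidistribution side while on the stochastic side they are absorbed into $\bot$. No deep probabilistic argument is required; the correspondence is essentially a reorganisation of the same sums, justified by the history-wise definition of $\hdist{n}$ and the rules of $\pTO[\PARSone]$.
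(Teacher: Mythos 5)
Your proof is correct and follows essentially the same route as the paper's: both define $\mdistone_n$ as the last-coordinate projection of $\hdist{n}$, decompose it as the convex multiset union $\biguplus p \cdot \prms{1:\objone_n}$, apply the terminal/nonterminal case split from Definition~\ref{d:reduction} to each singleton, and identify the result with $\mdistone_{n+1}$ via Lemma~\ref{l:ds}. The only differences are that you spell out the appeal to equation~\eqref{eq:pi_n} and the bound $\sz{\hdist{n}} \leq 1$, which the paper leaves implicit.
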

\begin{proof}
  From $\nu \in \FMDist{A^n}$ define multidistribution
 \[\last{\nu} \defsym \mset{ p:{a_n} \mid p:(a_0,\dots,a_n) \in \nu}
 \tpkt
 \]
 We show $\mu_n \defsym \last{\nu_n}$
 satisfies the desired properties. 
 It is easy to see that $\overline{\mu_n}(a) = \prob{\rv_n = a}$.
 We show $\last{\nu_n} \pTO \last{\nu_{n+1}}$.
 Consider arbitrary $\pr{p}{(a_0,\dots,a_n)} \in \nu_n$.
 If $a_n$ is terminal,
 we have
 \[
 \prms{1:a_n} \pTO[\PARSone] \varnothing
 \tkom
 \]
 and otherwise
 we have $a_n \to \phi(a_0,\dots,a_n) \in \PARSone$,
 so
 \[
 \prms{1:a_n} \pTO[\PARSone] \phi(a_0,\dots,a_n)
 \tpkt
 \]
 Combining them we get
 \begin{align*}
 \last{\nu_n} &=
 \biguplus_{p:(a_0,\dots,a_n)\in \nu_n} p \cdot \mset{1 : a_n} \\
 &\pTO[\PARSone]
 \biguplus_{p:(a_0,\dots,a_n)\in \nu_n, a_n \notin \Term{\PARSone}} p \cdot \phi(a_0,\dots,a_n) \\
 &= \last{\nu_{n+1}}
   \tpkt
 \end{align*}
 The last equation follows from Lemma~\ref{l:ds}.
\qed
\end{proof}
  
\paragraph*{Multidistribution Reduction to Stochastic Reduction.}
\newcommand{\hmdist}[1]{\mdistthree_{#1}}
For the inverse translation, let us fix a PARS $\PARSone$
and $M = \{\mu_i\}_{i\in\N}$ such that
$\mdistone_0 \pTO[\PARSone] \mdistone_1 \pTO[\PARSone] \dots$ and $\mdistone_0$ is a distribution.
We now map $M$ to a stochastic sequence by fixing a strategy $\phi_M$ according to $M$. 
As a first step, let us construct a sequence of multidistributions $\hmdist{n}$ over realizable histories from $M$
in such a way that the multidistribution $\hmdist{n}$ assigns the 
probability $p$ to the history $\seq[0][n]{\objone}$ precisely when 
the occurrence $p: \objone_n \in \mdistone_n$ was developed through a sequence of steps 
$\objone_0 \to_\PARSone \dots \to_\PARSone \objone_n$ in $\mdistone_0 \pTO[\PARSone] \cdots \pTO[\PARSone] \mdistone_n$. 
\begin{definition}
  For each $n \in \N$ such that there is a step $\mdistone_{n-1} \pTO[\PARSone] \mdistone_{n}$ in $M$, 
  we define the multidistribution $\hmdist{n}$ over realizable histories of length $n$, with $\last{\hmdist{n}} = \mdistone_n$, inductively as follows. 
  Here, $\last{\hmdist{n}}$ is defined as in Lemma~\ref{l:stor}.
  First, we set $\hmdist{1} \defsym \mdistone_0$.
  In the inductive case, observe that for each nonterminal history $(a_0,\dots,a_n)$ occurring in $\hmdist{n}$ 
  there exists a transition $a_n \to d_{a_0,\dots,a_n} \in \PARSone$
  so that 
  \begin{align}
    \mdistone_n &= \biguplus_{p:(a_0,\dots,a_n) \in \hmdist{n}} \prms{p:a_n}
 \tag*{}\\
    \quad&\pTO[\PARSone]
    \biguplus_{\substack{p:(a_0,\dots,a_n) \in \hmdist{n}\\a_n\notin\Term{\PARSone}\\a_{n+1} \in \Supp{d_{a_0,\dots,a_n}}}}\prms{ p \cdot d_{a_0,\dots,a_n}(a_{n+1}): a_{n+1} } 
    = \mdistone_{n+1}
    \tpkt
\label{eq:mu-step}
  \end{align}
  We set 
  \[
    \hmdist{n+1} \defsym 
  \biguplus_{\substack{p:(a_0,\dots,a_n) \in \hmdist{n}\\a_n\notin\Term{\PARSone}\\a_{n+1} \in \Supp{d_{a_0,\dots,a_n}}}}\mset{ p \cdot d_{a_0,\dots,a_n}(a_{n+1}) : (a_0,\dots,a_{n+1}) }
  \tpkt
  \]
\end{definition}
Note that when $\mdistone_n = \varnothing$ we have $\hmdist{n+1} = \varnothing$. 

Crucially, even if two objects occurring in $\mdistone_n$ are equal, they are separated by their history in $\hmdist{n}$. 
In other words:
\begin{lemma}
  For all $n \in \N$ with $\hmdist{n}$ defined, we have that $\Supp{\hmdist{n}}$ is a set. 
\end{lemma}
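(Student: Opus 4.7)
I would prove the lemma by induction on $n$, exploiting the fact that the history $(a_0,\dots,a_n)$ carries enough information to distinguish occurrences that may otherwise collapse in $\mu_n$.

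For the base case $n=1$, we have $\hmdist{1} = \mdistone_0$, and by hypothesis $\mdistone_0$ is a (plain) distribution, hence no element $a_0$ is listed more than once. So $\Supp{\hmdist{1}}$ is a set.

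For the inductive step, assume $\Supp{\hmdist{n}}$ is a set, and suppose for contradiction that some history $(a_0,\dots,a_n,a_{n+1})$ appears with multiplicity greater than one in $\hmdist{n+1}$. By the definition of $\hmdist{n+1}$, each occurrence arises from a pair consisting of an entry $p:(a_0,\dots,a_n) \in \hmdist{n}$ with $a_n$ nonterminal, together with a choice of $a_{n+1} \in \Supp{d_{a_0,\dots,a_n}}$. Two such occurrences with identical history share the same prefix $(a_0,\dots,a_n)$ and the same last element $a_{n+1}$. By the inductive hypothesis, the prefix $(a_0,\dots,a_n)$ occurs exactly once in $\hmdist{n}$, fixing a unique probability $p$ and a unique rule $a_n \to d_{a_0,\dots,a_n}$. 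Since $d_{a_0,\dots,a_n}$ is an ordinary distribution, $a_{n+1}$ contributes a single weight $d_{a_0,\dots,a_n}(a_{n+1})$. Hence the entry $p \cdot d_{a_0,\dots,a_n}(a_{n+1}) : (a_0,\dots,a_{n+1})$ is produced exactly once by the disjoint union in the definition of $\hmdist{n+1}$, contradicting the assumed duplication.

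There is essentially no hard step here; the content of the lemma is almost bookkeeping, since the labels in $\hmdist{n}$ are themselves the histories, and distinct histories extended by distinct successors yield distinct histories. The only subtlety worth making explicit is that the definition of $\hmdist{n+1}$ uses the convention that the prefix $(a_0,\dots,a_n)$ fixes both the underlying rule $a_n \to d_{a_0,\dots,a_n}$ and its probability $p$; this well-definedness is exactly what the inductive hypothesis that $\Supp{\hmdist{n}}$ is a set delivers, so the induction feeds cleanly into itself.
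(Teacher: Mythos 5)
Your proof is correct and follows essentially the same route as the paper: induction on $n$, with the base case coming from $\mu_0$ being a distribution and the inductive step combining the induction hypothesis with the fact that each $d_{a_0,\dots,a_n}$ is a genuine distribution (so each successor $a_{n+1}$ is listed only once). The paper states the inductive step in one line where you unfold it into an explicit contradiction argument, but the content is identical.
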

\begin{proof}
  The proof is by induction on $n$. 
  The base case is trivial, as $\mdistone_0$ and thus $\hmdist{0}$ is a distribution. 
  The inductive step follows directly from the induction hypothesis and the fact that the supports $\Supp{\distone}$, 
  for the distributions $\distone$ mentioned in~\eqref{eq:mu-step}, are sets.
  \qed
\end{proof}
This then justifies the following definition of the strategy $\phi_M$, which we use in the simulation of $M$ below.
\begin{definition}
  We define the strategy $\phi_M$ so that
  $\phi_M(a_0,\dots,a_n) = \distone_{a_0,\dots,a_n}$,
  such that,
  if $(a_0,\dots,a_n)$ is a nonterminal histories,
  then $a_n \to \distone_{a_0,\dots,a_n} \in \PARSone$ is 
  used in~\eqref{eq:mu-step} to reduce the occurrence of $a_n$ in $\mdistone_n$,
  and otherwise $\distone$ is arbitrary.
\end{definition}

\begin{lemma}[Reduction to Stochastic Sequences]\label{l:rtos}
  Let $\RV=\{\rv_n\}_{n\in\N}$ be the stochastic derivation under strategy $\phi_M$
  with $\prob{\rv_0 = a} = \mu_0(a)$.
  Then
  $\prob{\rv_n = \objone} = \clps{\mdistone_n}(\objone)$ for all $n \in \N$.
\end{lemma}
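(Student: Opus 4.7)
The plan is to prove Lemma~\ref{l:rtos} by induction on $n$, using a strengthened statement about joint history probabilities rather than single-step marginals. Concretely, I would first establish that for every $n \in \N$ and every sequence $(a_0,\dots,a_n) \in \Cone^{n+1}$,
\[
\prob{\rv_0 = a_0,\dots,\rv_n = a_n} \;=\; p \quad\text{if } p:(a_0,\dots,a_n) \in \hmdist{n},
\]
and $\prob{\rv_0 = a_0,\dots,\rv_n = a_n} = 0$ otherwise. The previous lemma, which asserts that $\Supp{\hmdist{n}}$ is a set, guarantees that at most one such $p$ exists, so this refined claim is well-posed.

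The base case is immediate from the hypothesis $\prob{\rv_0 = a} = \mu_0(a)$ together with the initialization of $\hmdist{}$ to $\mdistone_0$. For the inductive step I would factor through conditional probability,
\[
\prob{\rv_0 = a_0,\dots,\rv_{n+1} = a_{n+1}}
= \prob{\rv_{n+1} = a_{n+1} \mid \rv_0 = a_0,\dots,\rv_n = a_n} \cdot \prob{\rv_0 = a_0,\dots,\rv_n = a_n}.
\]
By the defining equations of a stochastic reduction under the strategy $\phi_M$, the first factor equals $\phi_M(a_0,\dots,a_n)(a_{n+1}) = d_{a_0,\dots,a_n}(a_{n+1})$ whenever $(a_0,\dots,a_n)$ is a realizable nonterminal history, and vanishes otherwise. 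Multiplying by the induction hypothesis on the second factor yields exactly the coefficient with which $(a_0,\dots,a_{n+1})$ is added to $\hmdist{n+1}$ in the step \eqref{eq:mu-step} used in the construction.

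From this history-level equality, the marginal statement for $a \in \Cone$ follows by summing over prefixes,
\[
\prob{\rv_n = a}
\;=\; \sum_{(a_0,\dots,a_{n-1}) \in \Cone^n} \prob{\rv_0 = a_0,\dots,\rv_{n-1} = a_{n-1},\rv_n = a}
\;=\; \last{\hmdist{n}}(a) \;=\; \clps{\mdistone_n}(a),
\]
and the case $a = \bot$ follows since probabilities sum to one and $\sum_{a \in \Cone} \clps{\mdistone_n}(a) = \sz{\mdistone_n}$, giving $\prob{\rv_n = \bot} = 1 - \sz{\mdistone_n} = \clps{\mdistone_n}(\bot)$. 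The main obstacle I expect is the bookkeeping around terminal states: once a trajectory reaches a terminal it is absorbed into $\bot$ at the next step, and correspondingly the multidistribution reduction drops the associated mass; these two effects must be shown to correspond precisely, which is enforced by the definition of $\phi_M$, since it reproduces exactly the choices $d_{a_0,\dots,a_n}$ used to reduce the occurrence of $a_n$ in $\mdistone_n$, and no history-collision at $a_n$ can spoil the identification.
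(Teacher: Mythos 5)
Your proposal is correct and follows essentially the same route as the paper's proof: the same strengthening to joint history probabilities (matching the coefficients of $\hmdist{n}$, well-posed because $\Supp{\hmdist{n}}$ is a set), the same induction with the conditional-probability factorization in the inductive step, and the same marginalization over prefixes via the law of total probability to recover $\clps{\mdistone_n}$. Your explicit treatment of the $\bot$ case is a small addition the paper leaves implicit in the definition of $\clps{\mdistone_n}(\bot)$.
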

\begin{proof}
  We show that $p:(\seq[0][n]{\objone}) \in \hmdist{n}$ iff $\prob{\rv_0=\objone_0,\dots,\rv_n=\objone_n} = p > 0$. 
  Using that $\last{\hmdist{n}} = \mdistone_n$, the lemma follows then from~\eqref{eq:pi_n}. 
  The proof is by induction on $n$. 

  The base case is trivial, as $\hmdist{0} = \mdistone_0$ corresponds to the starting distribution $\distone$ of $\RV$. 
  Concerning the inductive step, it suffices to realise that
  $p:(\seq[0][n]{\objone}) \in \hmdist{n}$ if and only if 
  $p \cdot q: (\seq[0][n+1]{\objone}) \in \hmdist{n+1}$, $\objone_n$ is nonterminal 
  and $q \defsym \phi_M(\seq[0][n]{\objone})(\objone_{n+1}) > 0$. 
  As $\prob{\rv_0=\objone_0,\dots,\rv_n=\objone_n,\rv_{n+1}=\objone_{n+1}} = q \cdot \prob{\rv_0=\objone_0,\dots,\rv_n=\objone_n}$ by definition, 
  the lemma follows then from induction hypothesis.
  \qed
\end{proof}

\paragraph*{Relating AST and \PAST\ to its stochastic versions.}
We have established a one-to-one correspondence between infinite multidistribution reductions $M$ and stochastic reductions $\RV$.
It is then not difficult to establish a correspondence between the expected derivation length of $M$ 
and the time of termination of $\RV$, relying on the following auxiliary lemma. 

\begin{lemma}\label{l:edl:aux}
  Let $\RV = \{\rv_n\}_{n \in \N}$ be a
  stochastic derivation in $\PARSone$
  with finitely supported $\rv_0$,
  and $M = \{ \mdistone_n \}_{n \in \N}$ a sequence of multidistributions 
  satisfying $\prob{\rv_n = \objone} = \clps{\mdistone_n}(\objone)$. 
  The following two properties hold.
  \begin{enumerate}
  \item\label{l:edl:aux:fin} 
    $\prob{\tterm \geq n} = \sz{\mdistone_n}$ for every $n \in \N$.
  \item\label{l:edl:aux:inf}
    $\prob{\tterm = \infty} = \lim_{n \to \infty} \sz{\mdistone_n}$.
  \end{enumerate}
\end{lemma}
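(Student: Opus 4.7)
The plan is to handle both parts simultaneously, leveraging two facts. First, the hypothesis $\prob{\rv_n = a} = \clps{\mu_n}(a)$ summed over $a \in \Cone$ gives $\prob{\rv_n \neq \bot} = \sz{\mu_n}$: the mass of $\mu_n$ records the probability that the process is still alive at step $n$. Second, $\bot$ is absorbing in $\RV$, since $\prob{\rv_{n+1} = \bot \mid \rv_n = \bot} = 1$, so the events $\{\rv_n = \bot\}$ form a monotone increasing chain in $n$.

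For part 1, I would proceed by induction on $n$. The absorbing property collapses $\{\tterm \geq n\}$ to a single condition on $\rv_{n-1}$, with the case $n = 0$ trivial because $\prob{\tterm \geq 0} = 1 = \sz{\mu_0}$ as $\mu_0$ is a genuine distribution. In the inductive step, for each summand $p:a$ in $\mu_n$, the corresponding rule producing $\mu_{n+1}$ is $\prms{1:a} \pTO[\PARSone] \varnothing$ if $a \in \Term{\PARSone}$, or $\prms{1:a} \pTO[\PARSone] d$ if $a \to d \in \PARSone$. These match, respectively, the stochastic transitions $\prob{\rv_{n+1} = \bot \mid \rv_n = a} = 1$ and $\prob{\rv_{n+1} = b \mid \rv_n = a} = d(b)$ prescribed by the definition of stochastic reduction. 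Summing these contributions via the law of total probability, and appealing to the construction supplied by Lemma~\ref{l:rtos} (equivalently Lemma~\ref{l:stor}) for the alignment, yields the identity at step $n+1$.

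For part 2, I would invoke continuity of probability for the decreasing family of events $\{\tterm \geq n\}$: their intersection is exactly $\{\tterm = \infty\}$, so $\prob{\tterm = \infty} = \lim_{n \to \infty} \prob{\tterm \geq n}$, and combined with part 1 this is $\lim_{n \to \infty} \sz{\mu_n}$. The limit exists since $\sz{\mu_n}$ is non-increasing: the only rule that changes total mass is $\prms{1:a} \pTO[\PARSone] \varnothing$, which strictly decreases it. The main subtlety I foresee is the careful indexing bookkeeping in the inductive step — especially ensuring that terminals appearing with positive probability in $\mu_0$ are correctly absorbed into the count — rather than any genuinely deep probabilistic content.
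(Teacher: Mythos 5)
Your opening observation --- that summing the hypothesis $\prob{\rv_n = \objone} = \clps{\mdistone_n}(\objone)$ over $\objone \in \Cone$ gives $\prob{\rv_n \neq \bot} = \sz{\mdistone_n}$, and that absorption of $\bot$ identifies $\{\tterm \geq n\}$ with the event that the chain is still alive --- is already the entire proof of part~1, and it is exactly what the paper does:
$\prob{\tterm \geq n} = \prob{\rv_n \in \Cone} = \sum_{\objone\in\Cone}\prob{\rv_n=\objone} = \sum_{\objone\in\Cone}\clps{\mdistone_n}(\objone) = \sz{\mdistone_n}$.
Part~2 via continuity along the decreasing events $\{\tterm\geq n\}$ coincides with the paper as well. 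The induction you then propose for part~1 is therefore redundant, and it is also slightly off target: it matches the reduction rules producing $\mdistone_{n+1}$ from $\mdistone_n$ against the transition probabilities of $\RV$, but the lemma does not assume that $M$ is a reduction sequence --- its only hypothesis is the marginal identity, precisely so that it can be invoked symmetrically after either Lemma~\ref{l:stor} or Lemma~\ref{l:rtos}. Re-deriving that alignment here both imports an assumption you are not given and re-proves what those two lemmas already establish. One bookkeeping point to settle: you correctly remark that absorption collapses $\{\tterm\geq n\}$ to a condition on $\rv_{n-1}$ (indeed $\tterm\geq n$ iff $\rv_{n-1}\neq\bot$), which would give $\prob{\tterm\geq n}=\sz{\mdistone_{n-1}}$, whereas the stated identity (and the paper's proof) reads $\tterm\geq n$ as $\rv_n\neq\bot$; whichever convention you adopt, fix it explicitly, since the off-by-one propagates into the identification of $\E(\tterm)$ with $\adl(M)$ in the theorem that follows.
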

\begin{proof}
  Concerning the first property, we have 
  \[
    \prob{\tterm \geq n} 
    = \prob{\rv_n \in \Cone}
    = \sum_{\objone \in \Cone} \prob{\rv_n = \objone}
    = \sum_{\objone \in \Cone} \clps{\mdistone_n}(\objone)
    = \sz{\mdistone_n}
    \tkom
  \]
  for all $n \in N$, where the penultimate equation follows from the assumption, and the last from the definition of $\clps{\mdistone_n}$. 
  As we have 
  \[
    \prob{\tterm = \infty} 
    = \lim_{n \to \infty} \prob{\tterm \geq n} 
    \tkom
  \]
  the second property thus follows from the first.
  \qed
\end{proof}
\fi

\begin{theorem}
  A PARS $\PARSone$ is (P)AST if and only if it is stochastically (P)AST.\@
\end{theorem}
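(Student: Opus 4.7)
The plan is to exploit the one-to-one correspondence between stochastic reductions and multidistribution reductions that was established in Lemmas~\ref{l:stor} and~\ref{l:rtos}, and then transport the quantitative characterisations of AST and \PAST\ across this correspondence using Lemma~\ref{l:edl:aux}.

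First I fix an arbitrary object $a \in \Cone$ as starting state, so that on the stochastic side we consider reductions $\RV = \{\rv_n\}_{n \in \N}$ with $\prob{\rv_0 = a} = 1$ (which is in particular finitely supported), and on the multidistribution side we consider reductions $M = \{\mu_n\}_{n \in \N}$ starting from $\mu_0 = \prms{1:a}$. By Lemma~\ref{l:stor} each such $\RV$ gives rise to a matching $M$ with $\clps{\mu_n}(b) = \prob{\rv_n = b}$, and by Lemma~\ref{l:rtos} each such $M$ arises, via the strategy $\phi_M$, from some stochastic reduction $\RV$ satisfying the same identity. Hence the quantifications ``for every stochastic reduction'' and ``for every reduction sequence from $\prms{1:a}$'' correspond.

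For the AST case, Lemma~\ref{l:edl:aux}(\ref{l:edl:aux:inf}) gives $\prob{\tterm = \infty} = \lim_{n \to \infty}\sz{\mu_n}$. Since both $\{\sz{\mu_n}\}_n$ and $\{\prob{\tterm \ge n}\}_n$ are monotonically nonincreasing (so the limit exists), we obtain $\prob{\tterm = \infty} = 0$ if and only if $\lim_n \sz{\mu_n} = 0$. Taking this over all starting states $a$ and all matching pairs $(\RV, M)$ yields the AST equivalence. For the \PAST\ case, I combine Lemma~\ref{l:edl:aux}(\ref{l:edl:aux:fin}) with the standard identity \eqref{eq:expected} for stopping times:
\[
  \E(\tterm) \ =\ \sum_{n=1}^{\infty} \prob{\tterm \ge n} \ =\ \sum_{n=1}^{\infty} \sz{\mu_n} \ =\ \adl(M) \tpkt
\]
Thus $\E(\tterm) < \infty$ holds for every $\RV$ if and only if $\adl(M) < \infty$ holds for every $M$ starting from $\prms{1:a}$, which is exactly \PAST.

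The only potentially delicate point is making sure that the correspondence really is onto in both directions when quantifying over all strategies: every reduction sequence from $\prms{1:a}$ is realised by some strategy $\phi_M$ (Lemma~\ref{l:rtos}), and every strategy produces a reduction sequence on the multidistribution side (Lemma~\ref{l:stor}), so no extremal reduction on either side is missed when one takes the ``for every'' quantifier. Everything else is bookkeeping with Lemma~\ref{l:edl:aux} and \eqref{eq:expected}.
\qed
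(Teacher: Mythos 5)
Your proposal is correct and follows essentially the same route as the paper's own proof: both directions are obtained by transporting reductions across the correspondence of Lemmas~\ref{l:stor} and~\ref{l:rtos} and then applying Lemma~\ref{l:edl:aux} together with the identity~\eqref{eq:expected} to equate $\E(\tterm)$ with the expected derivation length. Your write-up is merely a little more explicit about the surjectivity of the correspondence under the universal quantifiers, which the paper leaves implicit.
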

\ifextended
\begin{proof}
  We consider the ``if'' direction first. 
  Suppose $\PARSone$ is AST.\@ 
  Lemma~\ref{l:stor} translates an 
  arbitrary stochastic derivation $\RV=\{\rv_n\}_{n \in \N}$ in $\PARSone$
  to a reduction
  $M = (\mdistone_0 \pTO[\PARSone] \mdistone_1 \pTO[\PARSone] \cdots)$, 
  for which we have $\prob{\tterm = \infty} = \lim_{n \to \infty} \sz{\mdistone_n} = 0$
  by Lemma~\eref{l:edl:aux}{inf}. Hence, $\PARSone$ is stochastically AST. 
  If $\PARSone$ is moreover \PAST, using Lemma~\eref{l:edl:aux}{fin} 
  and~\eqref{eq:expected} we 
  get 
  \[
    \N \ni \adl(M) 
    = \sum_{n \geq 1} \sz{\mdistone_n}
    = \sum_{n \geq 1} \prob{\tterm \geq n}
    = \E(\tterm)
    \tkom
  \]
  where we tacitly employ $\prob{\tterm = \infty} = 0$.   

  The ``only if'' direction is proven dual, using Lemma~\ref{l:rtos}.
  \qed
\end{proof}
\fi


\section{Probabilistic Term Rewrite Systems}
Now we formulate probabilistic term rewriting following \cite{BG:RTA:05}, and then lift the
interpretation method for term rewriting to the probabilistic case.

We briefly recap notions from rewriting; see \cite{BN:1998} for an
introduction to rewriting.  A \emph{signature} $F$ is a set of
\emph{function symbols} $\fs$ associated with their \emph{arity}
$\arity{\fs} \in \N$.  The set $\TERMS$ of \emph{terms} over a signature
$\FS$ and a set $\VS$ of variables (disjoint with $\FS$) is the least set
such that $\vx \in \TERMS$ if $\vx \in \VS$ and
$\fs(t_1,\dots,t_{\arity{\fs}}) \in \TERMS$ whenever $\fs \in \FS$ and $t_i
\in \TERMS$ for all $1 \leq i \leq \arity{\fs}$.
A \emph{substitution} is a mapping $\sigma : \VS \to \TERMS$,
which is extended homomorphically to terms.
We write $t\sigma$ instead of $\sigma(t)$.
A \emph{context} is a term $C \in \TERMS[\FS][\VS\cup\{\hole\}]$ 
containing exactly one occurrence of a special variable $\hole$. 
With $C[t]$ we denote the term obtained by replacing $\hole$ in $C$ with $t$.

To define probabilistic rewriting,
we first extend substitutions and contexts to multidistributions as before: 
$\mu\sigma \defsym \prms{p_1: t_1\sigma,\dots; p_n:t_n\sigma}$
and $C[\mu] \defsym \prms{p_1: C[t_1], \dots; p_n: C[t_n]}$
for $\mu = \prms{p_1:t_1, \dots; p_n:t_n}$.
Given a multidistribution $\mu$ over $A$,
we define a mapping $\overline{\mu} : A \to \Realpos$ by
$\overline{\mu}(a) \defsym \sum_{p:a\in \mu}p$,
which forms a distribution if $|\mu| = 1$.

\begin{definition}[Probabilistic Term Rewriting]
  A \emph{probabilistic rewrite rule} is a pair of $l \in \TERMS$ and
  $\distone \in \FDist{\TERMS}$, written $l \to \distone$.  A
  \emph{probabilistic term rewrite system (PTRS)} is a (typically
  finite) set of probabilistic rewrite rules.  We write
  $\widehat\PTRSone$ for the PARS consisting of a probabilistic reduction
  $C[l\sigma] \to C[\overline{d\sigma}]$ for every probabilistic rewrite rule
  $l\to d \in \PTRSone$, context $C$, and substitution $\sigma$.
    We
  say a PTRS $\PTRSone$ is AST/SAST if $\widehat\PTRSone$ is.
\end{definition}

Note that,
for a distribution $d$ over terms, $d\sigma$ is in general a multidistribution;
e.g., consider $\prs{\half:x; \half:y}\sigma$ with $x\sigma = y\sigma$.
This is why we need $C[\overline{d\sigma}]$, which is a distribution,
to obtain a probabilistic reduction above.

\begin{example}\label{e:trs}
\def\f{\fun{f}}
\def\s{\fun{s}}
\def\z{\fun{0}}
The random walk of Example~\ref{ex:rw} can be modeled by a PTRS
consisting of a single rule $\s(x) \to \prs{p:x; 1-p:\s(\s(x))}$.  To
rewrite a term, typically there are multiple choices of a subterm to
reduce (\emph{redex}), which is up to strategy.  For instance,
$\s(\f(\s(\z)))$ has two possible reducts: $\prms{p:\f(\s(\z)); 1-p:
  \s(\s(\f(\s(\z))))}$ and $\prms{p:\s(\f(\z)); 1-p:
  \s(\f(\s(\s(\z))))}$.
\end{example}


\subsection{Interpretation Methods for Proving \SPAST}
\newcommand{\Int}[2][\alpha]{\left\llbracket#2\right\rrbracket_\PAone^{#1}}
\newcommand{\IntD}{\interpret[\PAone]}
\newcommand{\I}[2][\PAone]{{#2}_{#1}}

We now generalise the \emph{interpretation method} for term rewrite systems to the probabilistic setting. The following notion is standard. 

\newcommand{\GTD}{\sqsupset}

\begin{definition}[$\FS$-Algebra]
  An \emph{$\FS$-algebra} $\PAone$ on a non-empty carrier set $X$
  specifies the \emph{interpretation} $\I\fs : X^{\arity\fs} \to X$ of
  each function symbol $\fs\in\FS$.
  We say $\PAone$ is \emph{monotone} with respect to a binary relation
  ${\GT} \subseteq X \times X$
  if $x \GT y$ implies
  $\I\fs(\dots,x,\dots) \GT \I\fs(\dots,y,\dots)$
  for every $\fs \in \FS$.
  Given an \emph{assignment} $\alpha \ofdom \VS \to X$,
  the interpretation 
  of a term is
  defined as follows:
  \[
    \Int{t} 
    \defsym
      \begin{cases}
        \alpha(t) & \text{if $t \in \VS$,}\\
        \I{\fs}(\Int{t_1},\dots,\Int{t_n})
        & \text{if $t = \fs(\seq[n]{t})$.}
      \end{cases}
    \]
  We write $s \GT_\PAone t$ iff $\Int{s} \GT \Int{t}$ for every assignment $\alpha$.
\end{definition}

In the non-probabilistic case, the interpretation method refers to a class of termination 
methods that use monotone $\FS$-algebras to embed reduction sequences into a well-founded order $\succ$.
This method is also complete:

\begin{theorem}[cf.\ \cite{Terese}]
A TRS $\TRSone$ is terminating iff
there exists an $\FS$-algebra $\PAone$
which is monotone with respect to a well-founded order $\GT$
and satisfies $\TRSone \subseteq {\GT_\PAone}$.
\end{theorem}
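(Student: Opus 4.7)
The equivalence is a classical reformulation of the interpretation method, and I address both directions.

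For the soundness (``if'') direction, the plan is to show that every rewrite step of $\TRSone$ is mirrored by $\GT_\PAone$ on $X$, so that an infinite reduction would project to an infinite $\GT$-descending chain, contradicting well-foundedness. Two standard ingredients are needed: (i) a \emph{substitution lemma} $\Int[\alpha]{t\sigma} = \Int[\beta]{t}$ with $\beta(x) \defsym \Int[\alpha]{\sigma(x)}$, proved by a routine induction on $t$; and (ii) \emph{closure of $\GT_\PAone$ under contexts}, which is immediate from monotonicity of each $\I\fs$ with respect to $\GT$. Assuming $\TRSone \subseteq {\GT_\PAone}$, (i) and (ii) together yield $C[l\sigma] \GT_\PAone C[r\sigma]$ for every rule $l \to r$, context $C$ and substitution $\sigma$, i.e.\ ${\to_\TRSone} \subseteq {\GT_\PAone}$. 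Fixing any assignment, an infinite reduction $s_1 \to_\TRSone s_2 \to_\TRSone \cdots$ then maps to $\Int{s_1} \GT \Int{s_2} \GT \cdots$ in $X$, contradicting well-foundedness of $\GT$.

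For the completeness (``only if'') direction, the plan is to exhibit the canonical \emph{term model} witnessing termination. Take the carrier $X \defsym \TERMS$, interpret each symbol syntactically by $\I\fs(t_1,\dots,t_n) \defsym \fs(t_1,\dots,t_n)$, and set $\GT \defsym {\to_\TRSone^+}$. Then an assignment $\alpha : \VS \to \TERMS$ is literally a substitution, and a one-line induction on $t$ gives $\Int{t} = t\alpha$. Well-foundedness of $\GT$ is exactly termination of $\TRSone$. Monotonicity of $\PAone$ with respect to $\GT$ reduces to the well-known fact that $\to_\TRSone$ (and hence its transitive closure) is closed under contexts. Finally, for each rule $l \to r \in \TRSone$ and any $\alpha$, $\Int{l} = l\alpha \to_\TRSone r\alpha = \Int{r}$, so $l \GT_\PAone r$, establishing $\TRSone \subseteq {\GT_\PAone}$.

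No single step is a real obstacle: the soundness direction hinges on the standard substitution lemma together with chasing definitions, while the completeness construction is the folklore term model. The only mildly subtle point is checking that ``assignments'' in the completeness direction genuinely coincide with term substitutions, which is forced by the choice $X = \TERMS$ and the syntactic interpretation of the function symbols.
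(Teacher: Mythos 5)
Your proof is correct and takes essentially the same route as the paper: the completeness direction uses exactly the term algebra $\TA$ on $\TERMS$ (with assignments read as substitutions and ${\GT} = {\to_\TRSone^+}$) that the paper singles out immediately after the theorem statement, and the soundness direction via the substitution lemma and closure under contexts mirrors the argument the paper later gives for the probabilistic analogue. Nothing further to add.
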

For a completeness proof, the \emph{term algebra} $\TA$, an
$F$-algebra on $\TERMS$ such that $\I[\TA]\fs(\seq[n]{t}) =
\fs(\seq[n]{t})$, plays a crucial role.  In this term algebra,
assignments are substitutions, and
$\interpret[\TA][\sigma]{t} = t\sigma$.

Now we introduce a probabilistic version of interpretation methods,
which is sound and complete for proving \SPAST.
To achieve completeness, we first keep the technique as general as possible.
  For an $\FS$-algebra $\PAone$,
  we lift the interpretation of terms to multidistributions as before, i.e., 
  \[
    \Int{\prms{p_1:t_1,\dots; p_n:t_n}} \defsym
    \prms{p_1:\Int{t_1},\dots; p:\Int{t_n}} \tpkt
  \]

\begin{definition}[Probabilistic $\FS$-Algebra]
A \emph{probabilistic monotone $\FS$-\linebreak algebra}
$(\PAone,{\GTD})$
is an $\FS$-algebra $\PAone$ equipped with
a relation ${\GTD} \subseteq X \times \FDist{X}$, such that 
for every $\fs \in \FS$, $\I{\fs}$ is monotone
with respect to $\GTD$, i.e., $x \GTD d$ implies
$\I{\fs}(\dots,x,\dots) \GTD \overline{\I{\fs}(\dots,d,\dots)}$
where $\I{\fs}(\dots,\cdot,\dots)$
is extended to (multi-)\linebreak distributions. 
\ifextended\else
We say it is \emph{collapsible} (cf.~\cite{HM:RTA:14}) if 
there exist a function $\Gcol : X \to \Realpos$ and $\epsilon > 0$ such that
$x \GTD d$ implies
$\Gcol(x) >_\epsilon \E(\Gcol(d))$.
\fi
\end{definition}

\ifextended
The following is an extension of a notion from \cite{HM:RTA:14}. 

\begin{definition}[Collapsibility]
  We say a relation ${\GTD} \subseteq X \times \FDist{X}$
  is \emph{collapsible} if 
  there exist a function $\Gcol : X \to \Realpos$ and $\epsilon > 0$ such that
  $x \GTD d$ implies
  $\Gcol(x) >_\epsilon \E(\Gcol(d))$.
  We say a monotone $\FS$-algebra $(\PAone,\GTD)$ is collapsible if $\GTD$ is. 
\end{definition}
\fi

For a relation ${\GTD} \subseteq X \times \FDist X$,
we define the relation ${\GTD_\PAone} \subseteq \TERMS \times \FDist\TERMS$
by
$t \GTD_\PAone d$
iff $\Int{t} \GTD \overline{\Int{d}}$ for every assignment $\alpha : \VS \to X$.
The following property is easily proven by induction. 
\begin{lemma}\label{l:inter}
  Let $(\PAone,\GTD)$ be a probabilistic monotone $\FS$-algebra.
  If $s \GTD_\PAone d$ then
$\Int{s\sigma} \GTD \overline{\Int{d\sigma}}$
and
$\Int{C[s]} \GTD \overline{\Int{C[d]}}$
for arbitrary $\alpha$,
$\sigma$, and
$C$.
\end{lemma}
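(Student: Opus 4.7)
The plan is to derive both statements from the definition of $\GTD_\PAone$ by exhibiting, in each case, an assignment at which the required interpreted inequality already holds.

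For the substitution claim, I would first establish the standard substitution lemma for $\FS$-algebras: for every term $t$, assignment $\alpha$, and substitution $\sigma$, one has $\Int{t\sigma} = \Int[\beta]{t}$, where $\beta : \VS \to X$ is the assignment $\beta(x) \defsym \Int{\sigma(x)}$. This is a routine structural induction on $t$. Lifting the identity componentwise to multidistributions yields $\Int{d\sigma} = \Int[\beta]{d}$, and hence $\overline{\Int{d\sigma}} = \overline{\Int[\beta]{d}}$. Instantiating the hypothesis $s \GTD_\PAone d$ at the assignment $\beta$ then gives $\Int[\beta]{s} \GTD \overline{\Int[\beta]{d}}$, which is precisely $\Int{s\sigma} \GTD \overline{\Int{d\sigma}}$.

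For the contextual claim, I would proceed by structural induction on $C$ with the assignment $\alpha$ fixed throughout. The base case $C = \hole$ is immediate from the hypothesis instantiated at $\alpha$. In the inductive step, write $C = \fs(t_1,\dots,C',\dots,t_n)$ with $C'$ in some position. Unfolding the interpretation,
\[
  \Int{C[s]} = \I{\fs}(\Int{t_1},\dots,\Int{C'[s]},\dots,\Int{t_n}).
\]
The induction hypothesis supplies $\Int{C'[s]} \GTD \overline{\Int{C'[d]}}$, and monotonicity of $\I{\fs}$ with respect to $\GTD$ then gives
\[
  \Int{C[s]} \GTD \overline{\I{\fs}(\Int{t_1},\dots,\overline{\Int{C'[d]}},\dots,\Int{t_n})}.
\]

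It remains to recognise that the multidistribution on the right equals $\overline{\Int{C[d]}}$. Writing $d = \prs{p_1:u_1,\dots;p_k:u_k}$, both sides simplify under $\overline{\cdot}$ to the distribution assigning to each $x \in X$ the sum of those $p_j$ for which $\I{\fs}(\Int{t_1},\dots,\Int{C'[u_j]},\dots,\Int{t_n}) = x$. The only delicate point is that the pointwise extension of $\I{\fs}$ to multidistributions may preserve or introduce collisions, but since $\overline{\cdot}$ is applied on the outside in both expressions the collapsed distributions coincide. I expect this bookkeeping around the interaction between $\overline{\cdot}$ and the pointwise lifting of $\I{\fs}$ to be the only real obstacle; the structural induction itself and the substitution lemma are entirely standard.
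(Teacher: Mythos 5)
Your proof is correct and follows essentially the same route as the paper's: the substitution claim via the assignment $\beta(x) = \Int{x\sigma}$ and the standard substitution lemma $\Int[\beta]{t} = \Int{t\sigma}$, and the context claim by structural induction on $C$ using monotonicity of each $\I{\fs}$ with respect to $\GTD$. The collapsing bookkeeping you flag at the end is indeed the only point needing care, and your resolution of it is sound (the paper elides it entirely).
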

\ifextended
\begin{proof}
Let $d = \prs{p_1:t_1,\dots; p_n:t_n}$.
Concerning the first property, define\ the assignment $\beta$ by
$\beta(x) = \Int{x\sigma}$ for every $x \in \VS$. 
By structural induction on $t$, one can verify $\Int[\beta]{t} = \Int{t\sigma}$ for all terms $t$.
Thus, from the assumption we get
\begin{align*}
  \Int{s\sigma}  = \Int[\beta]{s}
   \GTD \overline{\Int[\beta]{d}}
   &= \overline{\prms{p_1:\Int[\beta]{t_1},\dots; p_n: \Int[\beta]{t_n}}}\\
   &= \overline{\prms{p_1:\Int{t_1\sigma},\dots;p_n:\Int{t_n\sigma}}}
  = \overline{\Int{d\sigma}}
  \tpkt
\end{align*}
The second property is proven by induction on $C$, where 
the base case follows directly from the assumption, and the inductive step
from monotonicity.
\qed
\end{proof}
\fi

\begin{theorem}[Soundness and Completeness]\label{t:interpretation}
  A PTRS $\PTRSone$ is \SPAST\ iff
  there exists a collapsible monotone $\FS$-algebra $(\PAone,\GTD)$
  such that $\PTRSone \subseteq {\GTD_\PAone}$.
\end{theorem}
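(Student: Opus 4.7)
Both directions will go via Theorem~\ref{t:lyapunov:sound}, which lets us convert between ranking functions on $\widehat\PTRSone$ and the algebraic data for $\PTRSone$. For soundness, given a collapsible monotone $\FS$-algebra $(\PAone,\GTD)$ with collapsing function $\Gcol$ and constant $\epsilon > 0$ satisfying $\PTRSone \subseteq {\GTD_\PAone}$, I will fix any assignment $\alpha_0 : \VS \to X$ and set $f(t) \defsym \Gcol(\Int[\alpha_0]{t})$ as candidate ranking function for $\widehat\PTRSone$. Given an arbitrary reduction $C[l\sigma] \to C[\overline{d\sigma}] \in \widehat\PTRSone$ stemming from a rule $l \to d \in \PTRSone$, two applications of Lemma~\ref{l:inter}---first propagating $\sigma$, then the context $C$---yield $\Int[\alpha_0]{C[l\sigma]} \GTD \overline{\Int[\alpha_0]{C[\overline{d\sigma}]}}$. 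Collapsibility then gives $\Gcol(\Int[\alpha_0]{C[l\sigma]}) >_\epsilon \E(\Gcol(\overline{\Int[\alpha_0]{C[\overline{d\sigma}]}}))$, whose right-hand side equals $\E(f(C[\overline{d\sigma}]))$ by the identity $\E(g(\nu)) = \E(g(\overline{\nu}))$. Thus $f$ is a ranking function, and Theorem~\ref{t:lyapunov:sound} delivers \SPAST.

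For completeness, I take the term algebra $\TA$ (so $\Int[\sigma]{t} = t\sigma$) and set $\Gcol(t) \defsym \adh[\widehat\PTRSone](t)$, which is finite by \SPAST. I then define ${\GTD} \subseteq \TERMS \times \FDist{\TERMS}$ generously by $t \GTD \mu$ iff $\adh(C[t]) >_1 \E(\adh(C[\mu]))$ for every context $C$. Collapsibility with $\epsilon = 1$ is the case $C = \hole$. For monotonicity, given $t \GTD \mu$, to derive $\fs(\dots,t,\dots) \GTD \overline{\fs(\dots,\mu,\dots)}$ I instantiate the definition at $C \defsym C'[\fs(\dots,\hole,\dots)]$ for any outer $C'$, absorbing the function symbol into the context. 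Finally, $\PTRSone \subseteq {\GTD_\TA}$ reduces, for each rule $l \to d$ and substitution $\sigma$, to $\adh(C[l\sigma]) >_1 \E(\adh(C[\overline{d\sigma}]))$ for every $C$---precisely the ranking-function property that the completeness half of Theorem~\ref{t:lyapunov:sound} attaches to $\adh$ on the reduction $C[l\sigma] \to C[\overline{d\sigma}]$ in $\widehat\PTRSone$.

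The main source of friction will not be any single step but the bookkeeping around $\overline{\cdot}$, distinguishing multidistributions from their distribution collapses. The two facts repeatedly needed are $\E(g(\nu)) = \E(g(\overline{\nu}))$ and the commutation $\overline{C[\nu]} = C[\overline{\nu}]$, valid by injectivity of contexts; once these are recorded, every remaining step is mechanical.
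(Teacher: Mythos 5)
Your proof is correct, and its soundness half coincides with the paper's: fix an assignment, push the rule orientation through substitutions and contexts with Lemma~\ref{l:inter}, collapse with $\Gcol$, and invoke Theorem~\ref{t:lyapunov:sound}. The completeness half reaches the same destination by a slightly different road. The paper also works in the term algebra $\TA$, but it takes for $\GTD$ the one-step relation $\widehat\PTRSone$ itself, viewed as a subset of $\TERMS \times \FDist{\TERMS}$: monotonicity and the orientation $\PTRSone \subseteq {\GTD_\TA}$ are then immediate from the closure of $\widehat\PTRSone$ under contexts and substitutions, and the only substantive step is collapsibility, supplied by the ranking function $\adh[\widehat\PTRSone]$ coming out of Theorem~\ref{t:lyapunov:sound}. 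You instead take the larger, semantically defined relation $t \GTD \mu$ iff $\adh[\widehat\PTRSone](C[t]) >_1 \E(\adh[\widehat\PTRSone](C[\mu]))$ for all contexts $C$, which makes collapsibility (the case $C = \hole$) and monotonicity (absorb $\fs$ into the context) hold by definition and concentrates all the work in the orientation condition, again discharged by the ranking-function property from Theorem~\ref{t:lyapunov:sound} applied to each step $C[l\sigma] \to C[\overline{d\sigma}]$ of $\widehat\PTRSone$. Both witnesses are legitimate; the paper's is the minimal one and needs no universal quantification over contexts, while yours is somewhat more robust (any relation sandwiched between the two would serve) at the cost of the $\overline{\cdot}$ bookkeeping you already flag, namely $\E(g(\nu)) = \E(g(\overline{\nu}))$ and $\overline{C[\nu]} = C[\overline{\nu}]$.
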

\begin{proof}
  For the ``if'' direction, we show that the PARS $\widehat\PTRSone$ is \SPAST
  using Theorem~\ref{t:lyapunov:sound}.
  Let $\alpha : \VS \to X$ be an arbitrary assignment,
  which exists as $X$ is non-empty. 
  Consider $s \to d \in \widehat\PTRSone$.
  Then we have
  $s = C[l\sigma]$ and $d = C[\overline{d'\sigma}]$ 
  for some $\sigma$, $C$, and $l \to d' \in \PTRSone$.
  By assumption we have $l \GTD_\PAone d'$, and
  thus $\Int{s} \GTD \overline{\Int{d}}$ by Lemma~\ref{l:inter}.
  The collapsibility of $\GTD$ gives a function
  $\Gcol : X \to \Realpos$ and $\epsilon > 0$
  such that
  $\Gcol(\Int{s}) >_\epsilon \E(\Gcol(\overline{\Int{d}}))$,
  and by extending definitions
  we easily see $\E(\Gcol(\overline{\Int{d}})) = \E(\Gcol(\Int{d}))$.
  Thus $\Gcol(\Int{\cdot})$ is a ranking function.

  For the ``only if'' direction, suppose that $\PTRSone$ is \SPAST.\@ 
  We show $(\TA,\widehat\PTRSone)$ forms a collapsible probabilistic monotone
  $\FS$-algebra orienting $\PTRSone$. 
  \begin{itemize}
  \item
    Since $\PTRSone$ is \SPAST, Theorem~\ref{t:lyapunov:sound} gives
    a ranking function $f \ofdom \TERMS \to \Realpos$ and $\epsilon > 0$
    for the underlying PARS $\widehat\PTRSone$.
    Taking $\Gcol = f$, $\widehat\PTRSone$ is
    collapsible.
  \item
    Suppose $s \mathrel{\widehat\PTRSone} d$.
    Then we have $s = C[l\sigma]$ and $d = C[\,\overline{d'\sigma}\,]$ for some $C$, $\sigma$,
    and $l \to d' \in \PTRSone$.
    As
    $f(\dots,C,\dots)$ is also a context,
    $f(\dots,s,\dots) \mathrel{\widehat\PTRSone} f(\dots,d,\dots)$,
    concluding monotonicity.
  \item
    For every probabilistic rewrite rule $l \to \distone \in \PTRSone$ and
    every assignment (i.e., substitution) $\sigma : \VS \to \TERMS$,
    we have 
    $\interpret[\TA][\sigma]{l} =
     l\sigma \mathrel{\widehat\PTRSone} \overline{d\sigma} =
     \overline{\interpret[\TA][\sigma]{d}}$, 
    and hence $l \mathrel{\widehat\PTRSone_{\TA}} d$.
    This concludes $\PTRSone \subseteq \widehat\PTRSone_{\TA}$. \qed
  \end{itemize}
\end{proof}
\subsection{\Preb\ Algebras}
As probabilistic $\FS$-algebras are defined so generally, it is not
yet clear how to search them for ones that prove the termination of a
given PTRS.  Now we make one step towards finding probabilistic
algebras, by imposing some conditions to (non-probabilistic)
$\FS$-algebras, so that the relation $\GTD$ can be defined from
orderings which we are more familiar with.

Let us introduce some auxiliary notions.

\begin{definition}[\PrebDom]
  A \emph{\prebdom} is a set $X$ equipped with
  the \emph{\preb\ operation}
  $\E[X] : \FDist{X} \to X$.
\end{definition}

Of particular interest in this work will be the \prebdom s $\Realpos$ and $\Realpos^m$
with \preb\ operations
$\E(\prs{p_1:a_1,\dots;p_n:a_n}) = \sum_{i=1}^n p_i \cdot a_i$.

We naturally generalize the following notions from standard mathematics.
\begin{definition}[Concavity, Affinity]
Let $f : X \to Y$ be a function from and to \prebdom s.
We say $f$ is \emph{concave} with respect to an order ${\GT}$ on $Y$ 
if
$f(\E[X](d)) \GEQ \E[Y](\overline{f(d)})
$
where $\GEQ$ is the reflexive closure of $\GT$.
We say $f$ is \emph{affine} if it satisfies
$f(\E[X](d)) = \E[Y](\overline{f(d)})$.
\end{definition}
Clearly, every affine function is concave.

Now we arrive at the main definition and theorem of this section.

\begin{definition}[\Preb\ $\FS$-Algebra]
  A \emph{\preb\ $\FS$-algebra} is a pair $(\PAone,{\GT})$ of
  an $\FS$-algebra $\PAone$ on a \prebdom\ $X$
  and an order $\GT$ on $X$,
  such that for every $\fs \in \FS$,
  $\I{\fs}$ is monotone and concave with respect to $\GT$.
  We say it is \emph{collapsible} if there exist a concave function
  $\Gcol \colon X \to \Realpos$ (with respect to $>$) and $\epsilon > 0$ such 
  that $\Gcol(\paone) >_\epsilon \Gcol(\patwo)$ whenever $\paone \GT \patwo$.

  We define the relation ${\GT^{\E}} \subseteq X \times \FDist{X}$ by
  $x \GT^{\E} d$ iff $x \GT \E[X](d)$.
\end{definition}

Note that
the following theorem claims soundness but not completeness,
in contrast to Theorem~\ref{t:interpretation}.

\begin{theorem}\label{t:prebary}
A PTRS $\PTRSone$ is \SPAST
if $\PTRSone \subseteq {\GT^{\E}_\PAone}$
for a collapsible \preb\ $\FS$-algebra $(\PAone,{\GT})$.
\end{theorem}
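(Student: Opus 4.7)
The plan is to reduce Theorem~\ref{t:prebary} to Theorem~\ref{t:interpretation} by showing that a collapsible \preb\ $\FS$-algebra $(\PAone,\GT)$ induces a collapsible probabilistic monotone $\FS$-algebra $(\PAone,\GT^{\E})$ that orients the same rules.

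First, I would check that $(\PAone,\GT^{\E})$ is a probabilistic monotone $\FS$-algebra, i.e., that each $\I\fs$ is monotone with respect to $\GT^{\E}$. Fix $\fs$ and an argument position, and suppose $x \GT^{\E} d$, which by definition means $x \GT \E[X](d)$. By monotonicity of $\I\fs$ with respect to $\GT$ we obtain $\I\fs(\dots,x,\dots) \GT \I\fs(\dots,\E[X](d),\dots)$. By concavity of $\I\fs$ applied in that argument, $\I\fs(\dots,\E[X](d),\dots) \GEQ \E[X]\bigl(\overline{\I\fs(\dots,d,\dots)}\bigr)$. Combining the two (using that $\GEQ$ is the reflexive closure of $\GT$) yields $\I\fs(\dots,x,\dots) \GT \E[X]\bigl(\overline{\I\fs(\dots,d,\dots)}\bigr)$, which is exactly $\I\fs(\dots,x,\dots) \GT^{\E} \overline{\I\fs(\dots,d,\dots)}$, as required.

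Next, I would check that $\GT^{\E}$ is collapsible. Let $\Gcol : X \to \Realpos$ and $\epsilon > 0$ be the concave function and constant provided by collapsibility of $(\PAone,\GT)$. Given $x \GT^{\E} d$, we have $x \GT \E[X](d)$, hence $\Gcol(x) >_\epsilon \Gcol(\E[X](d))$ by the assumed inequality. Concavity of $\Gcol$ with respect to the standard order on $\Realpos$ gives $\Gcol(\E[X](d)) \geq \E(\overline{\Gcol(d)})$, and since collapsing a multidistribution by $\overline{\cdot}$ preserves its expected value, $\E(\overline{\Gcol(d)}) = \E(\Gcol(d))$. Chaining these gives $\Gcol(x) >_\epsilon \E(\Gcol(d))$, which is the collapsibility condition required in Theorem~\ref{t:interpretation}.

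Finally, the hypothesis $\PTRSone \subseteq {\GT^{\E}_\PAone}$ is exactly the orientation condition required by Theorem~\ref{t:interpretation} for the induced probabilistic monotone $\FS$-algebra, so that theorem concludes $\PTRSone$ is \SPAST. The main subtlety I expect in writing this out is the monotonicity step: the probabilistic notion of monotonicity packs a distribution into a single argument of $\I\fs$, and turning this back into an ordinary inequality cleanly requires invoking \emph{both} classical monotonicity and concavity of $\I\fs$ in sequence, in each argument position separately. Everything else is a direct translation of definitions, and the collapsibility argument only uses that the \preb\ operation on $\Realpos$ coincides with ordinary expectation.
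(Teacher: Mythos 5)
Your proposal is correct and follows essentially the same route as the paper's proof: reduce to Theorem~\ref{t:interpretation} by verifying that $(\PAone,\GT^{\E})$ is a collapsible probabilistic monotone $\FS$-algebra, establishing monotonicity by chaining classical monotonicity with concavity of each $\I{\fs}$, and establishing collapsibility via $\Gcol(x) >_\epsilon \Gcol(\E[X](d)) \geq \E(\overline{\Gcol(d)}) = \E(\Gcol(d))$. The steps and their order match the paper's argument exactly.
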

\begin{proof}
  Due to Theorem~\ref{t:interpretation},
  it suffices to show that
  $(\PAone,{\GT^{\E}})$ is a collapsible probabilistic monotone $\FS$-algebra.
  Concerning monotonicity, suppose $\paone \GT^{\E} d$, i.e.,
  $\paone \GT \E[X](d)$,
  and let $\fs \in \FS$. 
  Since $\I\fs$ is monotone and concave with respect to $\GT$ in every argument, we have
  \begin{align*}
    \I{\fs}(\dots,\paone,\dots)
    \GT \I{\fs}(\dots,\E[X](d),\dots)
    \GEQ \E[X](\overline{\I{\fs}(\dots,d,\dots)}) \tpkt
  \end{align*}
  Concerning collapsibility, whenever $x \GT \E[X](d)$
  we have
  \begin{align*}
  \Gcol(x) &>_\epsilon
  \Gcol(\E[X](d)) &&\text{by assumption on $\Gcol$,}\\
  &\ge
  \E(\overline{\Gcol(d)}) &&\text{as $\Gcol : X \to \Real$ is concave with respect to $>$,}\\
  &= \E(\Gcol(d)) &&\text{by the definition of $\E$ on multidistributions.}
  \tag*{\qed}
  \end{align*}
\end{proof}

The rest of the section
recasts two popular interpretation methods, polynomial and matrix interpretations (over the reals), 
as \preb\ $\FS$-algebras.

\paragraph{Polynomial interpretations} were introduced
(on natural numbers \cite{Lankford:75}
and real numbers \cite{Lucas:ITA:05})
for the termination analysis 
of non-probabilistic rewrite systems.
Various techniques for synthesizing polynomial interpretations (e.g., \cite{FGMSTZ:SAT:07}) exist, and these techniques are easily applicable in our setting.

\begin{definition}[Polynomial Interpretation]
A \emph{polynomial interpretation} is an $\FS$-algebra $\PAone$ on $\Realpos$
such that $\fs_\PAone$ is a polynomial for every $\fs \in \FS$.
We say $\PAone$ is \emph{multilinear} if
every $\I{\fs}$ is of the following form
with $\mathsf{c}_{V} \in \Realpos$:
  \[
    \I{\fs}(\seq[n]{\paone}) = \sum_{V \subseteq \sseq[n]{\paone}} \mathsf{c}_{V} \cdot \prod_{\paone_i \in V} \paone_i \tpkt
  \]
\end{definition}

In order to use polynomial interpretations for probabilistic termination,
multilinearity is necessary for satisfying the concavity condition.

\begin{proposition}\label{p:polynomial}
  Let $\PAone$ be a monotone multilinear polynomial interpretation and $\epsilon > 0$.
  If $\Int{l} >_\epsilon \E[](\Int{d})$ for every $l \to d \in \PTRSone$ and $\alpha$,
  then the PTRS $\PTRSone$ is \SPAST.
\end{proposition}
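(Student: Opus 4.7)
The plan is to invoke Theorem~\ref{t:prebary}, regarding $\PAone$ as a collapsible \preb\ $\FS$-algebra on the \prebdom\ $\Realpos$ (with the usual expectation as \preb\ operation) and taking the order to be $>_\epsilon$.

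The essential observation is that multilinearity makes each $\I{\fs}$ \emph{affine in every argument separately}: after fixing all but the $j$-th coordinate, a multilinear polynomial $\sum_V c_V \prod_{x_i \in V} x_i$ collapses to an affine expression of the form $a + b \cdot x_j$, and such expressions preserve weighted averages exactly. Consequently the concavity requirement $\I{\fs}(\dots,\E(d),\dots) \GEQ \E(\overline{\I{\fs}(\dots,d,\dots)})$ holds in each coordinate, and in fact with equality. Monotonicity of each $\I{\fs}$ with respect to $>_\epsilon$ is the standing hypothesis that $\PAone$ is monotone, read in the customary sense for polynomial interpretations over $\Realpos$ (slope at least one in each coordinate).

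For collapsibility, I would take $\Gcol$ to be the identity on $\Realpos$: it is affine (hence concave with respect to $>$), and $a >_\epsilon b$ unfolds literally to $\Gcol(a) >_\epsilon \Gcol(b)$. The orientation condition $\PTRSone \subseteq ({>_\epsilon})^{\E}_{\PAone}$ then reduces directly to the proposition's assumption, once one observes that $\E[\Realpos](\overline{\Int{d}}) = \E(\Int{d})$: collapsing the multidistribution $\Int{d}$ to a distribution leaves its expected value unchanged. Theorem~\ref{t:prebary} then delivers \SPAST. The only real subtlety is to read ``monotone'' in the proposition as monotonicity with respect to $>_\epsilon$; once that is settled, the three ingredients of Theorem~\ref{t:prebary} fall out mechanically from multilinearity and the rule hypothesis.
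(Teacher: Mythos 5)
Your proposal is correct and follows essentially the same route as the paper: both instantiate Theorem~\ref{t:prebary} with the \preb\ algebra $(\PAone,>_\epsilon)$ on $\Realpos$, observe that multilinearity makes each $\I{\fs}$ affine (hence concave) in every argument, and take $\Gcol$ to be the identity for collapsibility. You merely spell out the intermediate observations (affinity in each coordinate, invariance of expectation under collapsing $\Int{d}$ to a distribution) that the paper leaves implicit.
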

\begin{proof}
  The order $>_\epsilon$ is trivially collapsible with $\Gcol(x) = x$.
  Further,
  every multilinear polynomial is affine and thus concave in all variables.
  Hence $(\PAone,>_\epsilon)$ forms a \preb\ $\FS$-algebra,
  and thus Theorem~\ref{t:prebary} shows that $\PTRSone$ is \SPAST.
  \qed
\end{proof}

  An observation by Lucas~\cite{Lucas:ITA:05} also holds in probabilistic case:
  To prove a finite PTRS $\PTRSone$ \SPAST\ with polynomial interpretations,
  we do not have to find $\epsilon$, but
  it is sufficient to check 
  $l >^{\E}_\PAone d$ for all rules $l \to \distone \in \PTRSone$.
  Define $\epsilon^{}_{l\to d} \defsym \E(\Int{d}) - \Int{l}$ for
  such $\alpha$ that $\alpha(x)=0$.
  Then for any other $\alpha$,
  we can show $\E(\Int{d}) - \Int{l} \ge \epsilon^{}_{l \to d} > 0$.
  As $\PTRSone$ is finite, 
  we can take 
  $\epsilon \defsym \min\{ \epsilon^{}_{l\to d} \mid l \to \distone \in \PTRSone\} > 0$.

\ifextended
\begin{corollary}\label{c:polynomial}
A PTRS $\PTRSone$ is \SPAST if
$\PTRSone \subseteq {>^{\E}_\PAone}$ for a monotone multilinear polynomial interpretation $\PAone$.
\end{corollary}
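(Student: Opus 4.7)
The plan is to derive the corollary from Proposition~\ref{p:polynomial} by producing a single $\epsilon > 0$ witnessing $\Int{l} >_\epsilon \E(\Int{d})$ for every rule $l \to d \in \PTRSone$ and every assignment $\alpha$. Finiteness of $\PTRSone$ will be essential, since $\epsilon$ arises as a minimum over its rules.

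First I would fix a rule $l \to d \in \PTRSone$ and consider $P_{l\to d}(\alpha) := \Int{l} - \E(\Int{d})$, viewed as a polynomial in the variables occurring in $l$. The hypothesis $l >^{\E}_\PAone d$ gives $P_{l\to d}(\alpha) > 0$ for every assignment $\alpha : \VS \to \Realpos$. In particular, evaluating at the zero assignment yields a strictly positive constant $\epsilon_{l\to d} := P_{l\to d}(\mathbf{0}) > 0$, which is simply the difference of the constant coefficients of $\Int{l}$ and $\E(\Int{d})$.

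The key step is to show $P_{l\to d}(\alpha) \geq \epsilon_{l\to d}$ for every $\alpha : \VS \to \Realpos$. This is the main obstacle of the proof and the place where multilinearity of $\PAone$ becomes essential: since each $\I{\fs}$ is multilinear with non-negative coefficients, both $\Int{l}$ and $\E(\Int{d})$ expand into polynomials whose coefficients lie in $\Realpos$, and one argues that the non-constant portion of $P_{l\to d}$ is non-negative on the non-negative orthant, so that the infimum of $P_{l\to d}$ is attained at $\alpha = \mathbf{0}$.

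Finally, since $\PTRSone$ is finite, I set $\epsilon := \min\{\epsilon_{l\to d} \mid l \to d \in \PTRSone\} > 0$. This uniform $\epsilon$ witnesses $\Int{l} \geq \E(\Int{d}) + \epsilon$ for every rule and every assignment, so Proposition~\ref{p:polynomial} immediately concludes that $\PTRSone$ is \SPAST.
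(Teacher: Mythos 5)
Your overall route is exactly the paper's: evaluate $P_{l\to d} := \Int{l} - \E(\Int{d})$ at the zero assignment to obtain $\epsilon_{l\to d} > 0$, argue that this value is a global lower bound over all assignments into $\Realpos$, and take the minimum over the finitely many rules before invoking Proposition~\ref{p:polynomial}. (You are right that finiteness is essential even though the corollary's statement omits it; the paper's surrounding remark assumes it.) The first and last steps are fine. The problem is the middle step, which you correctly single out as the main obstacle but then justify with an argument that does not work.

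The inference you offer --- that because $\Int{l}$ and $\E(\Int{d})$ each expand into polynomials with coefficients in $\Realpos$, the non-constant part of their \emph{difference} is non-negative on the orthant --- is a non sequitur: subtracting two polynomials with non-negative coefficients freely produces negative non-constant coefficients. What actually does the work is the hypothesis that $P_{l\to d}>0$ for \emph{every} assignment, combined with multilinearity of $P_{l\to d}$ itself: for a multilinear $P$ positive on the orthant, the coefficient of each variable (a polynomial in the remaining variables) must be non-negative there, since otherwise $P$ tends to $-\infty$ as that variable grows; hence $P$ is non-decreasing in each coordinate and is minimized at $\mathbf{0}$. Moreover, even this repaired argument needs the caveat that $P_{l\to d}$ is multilinear only when no variable is repeated in $l$ or in the terms of $d$. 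For a non-left-linear rule such as $\fun{f}(x,x) \to \prs{1:\fun{g}(x)}$ with $\I{\fun{f}}(a,b) = ab+a+b+1$ and $\I{\fun{g}}(a)=3a$ (both monotone and multilinear), one gets $P(x) = x^2 - x + 1$, which is strictly positive on $\Realpos$ yet equals $\tfrac{3}{4} < 1 = P(0)$ at $x = \tfrac{1}{2}$: the infimum is not attained at the zero assignment, so the inequality $P_{l\to d}(\alpha) \geq \epsilon_{l\to d}$ is false as stated. (The paper's own remark preceding the corollary asserts the same inequality without proof and is open to the same objection; in this example the infimum is still positive, but establishing that in general requires a different argument.)
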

\fi

\begin{example}[Example~\ref{e:trs} Continued]
  Consider again the PTRS of single rule $\fun{s}(x) \to \prs{p:x; 1-p:\fun{s}(\fun{s}(x))}$.
  Define the polynomial interpretation $\PAone$ by
  $\I{\fun0} \defsym 0$ and $\I{\fun{s}}(x) \defsym x + 1$. 
  Then whenever $p > \frac{1}{2}$ we have
  \begin{multline*}
    \Int{\fun{s}(x)} 
    = x + 1
    > p \cdot x + (1 - p) \cdot (x + 2) 
    = \E(\Int{\prs{p:x; 1-p:\fun{s}(\fun{s}(x))}})
    \tpkt
  \end{multline*}
  Thus, when $p > \frac{1}{2}$ the PTRS is \SPAST by Proposition~\ref{p:polynomial}.
\end{example}

We remark that polynomial interpretations are not covered by 
\cite[Theorem 5]{BG:RTA:05},
since \emph{context decrease}~\cite[Definition 8]{BG:RTA:05} demands
$\Int{\fs(t)} - \Int{\fs(t')} \le \Int{t} - \Int{t'}$,
which disallows interpretations such as $\I\fs(x) = 2x$.


\paragraph{Matrix interpretations}
are
introduced for the termination analysis of term rewriting~\cite{EWZ:JAR:08}.
Now we extend them for probabilistic term rewriting.

\begin{definition}[Matrix Interpretation]
  A \emph{(real) matrix interpretation} is 
  an $\FS$-algebra $\PAone$ on $\Realpos^m$ such that
  for every $f \in \FS$, $\I{\fs}$ is of the form
  \begin{equation}\label{eq:mat}
    \I{\fs}(\seq[n]{\vec x}) = \sum_{i=1}^n C_i \cdot \vec{x}_i + \vec{c} \tkom
  \end{equation}
  where $\vec{c} \in \Realpos^m$,
  and $C_i \in \Realpos^{m \times m}$.
  The order ${\gg_\epsilon} \subseteq \Realpos^m \times \Realpos^m$ is 
  defined by
  \[
    (\seq[m]{\paone})^T \gg_\epsilon (\seq[m]{\patwo})^T \defiff \paone_1 >_\epsilon \patwo_1 \text{ and } \paone_i \geq \patwo_i \text{ for all $i = 2,\dots,m$.}
  \]
\end{definition}

It is easy to derive the following from Theorem~\ref{t:prebary}:
\begin{proposition}\label{p:matrix}
  Let $\PAone$ be a monotone matrix interpretation and $\epsilon > 0$.
  If $\Int{l} \gg_\epsilon \E(\Int{d})$ for every $l \to d \in \PTRSone$ and $\alpha$,
  then the PTRS $\PTRSone$ is \SPAST.
\end{proposition}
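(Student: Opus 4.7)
The plan is to invoke Theorem~\ref{t:prebary} directly, so the real task is to verify that a monotone matrix interpretation $(\PAone,\gg_\epsilon)$ is a collapsible \preb\ $\FS$-algebra. First I would equip the carrier $\Realpos^m$ with the componentwise expected-value operation $\E[\Realpos^m](\prs{p_1:\vec x_1,\dots;p_n:\vec x_n}) = \sum_i p_i \cdot \vec x_i$, so that $\Realpos^m$ becomes a \prebdom. The assumption $\Int{l} \gg_\epsilon \E(\Int{d})$ then rephrases exactly as $\PTRSone \subseteq {\gg_\epsilon^{\E}_\PAone}$, matching the hypothesis of Theorem~\ref{t:prebary}.

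Next I would check the two structural requirements on $(\PAone,\gg_\epsilon)$. Monotonicity of each $\I\fs$ with respect to $\gg_\epsilon$ is given by hypothesis. For concavity, observe that each $\I\fs$ defined in~\eqref{eq:mat} is an affine map $\vec x \mapsto \sum_i C_i \vec x_i + \vec c$; affine maps commute with expectations, i.e., $\I\fs(\dots,\E[\Realpos^m](d),\dots) = \E[\Realpos^m](\overline{\I\fs(\dots,d,\dots)})$, so $\I\fs$ is affine and hence concave with respect to $\gg_\epsilon$ (since $\GEQ$ is reflexive).

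For collapsibility I would take the projection $\Gcol \ofdom \Realpos^m \to \Realpos$, $\Gcol((x_1,\dots,x_m)^T) \defsym x_1$. This function is linear, therefore affine, therefore concave with respect to $>$ on $\Realpos$. Moreover, by the very definition of $\gg_\epsilon$, whenever $\vec x \gg_\epsilon \vec y$ we have $x_1 >_\epsilon y_1$, that is $\Gcol(\vec x) >_\epsilon \Gcol(\vec y)$, as required.

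With monotonicity, concavity and collapsibility all in place, $(\PAone,\gg_\epsilon)$ is a collapsible \preb\ $\FS$-algebra and the assumption yields $\PTRSone \subseteq {\gg_\epsilon^{\E}_\PAone}$, so Theorem~\ref{t:prebary} concludes that $\PTRSone$ is \SPAST. I do not expect any real obstacle here; the only subtlety is remembering that the first-coordinate projection is both what makes $\gg_\epsilon$ collapsible \emph{and} a concave (indeed affine) map into $\Realpos$, which is precisely the two conditions that collapsibility of a \preb\ $\FS$-algebra demands.
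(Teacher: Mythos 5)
Your proposal is correct and follows essentially the same route as the paper's own (very terse) proof: both reduce to Theorem~\ref{t:prebary} by noting that the affine form~\eqref{eq:mat} gives concavity and that the first-coordinate projection witnesses collapsibility of $\gg_\epsilon$. Your write-up merely spells out the details the paper leaves implicit.
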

\ifextended
\begin{proof}
  The order $\gg_\epsilon$ is collapsible with $\Gcol((\seq[m]x)^T) = x_1$.
  It is well known that \eqref{eq:mat} is affine and thus concave.
  \qed
\end{proof}
\fi

As in polynomial interpretations, for finite systems we do not have to find $\epsilon$.
Monotonicity can be ensured
if \eqref{eq:mat} satisfies $(C_i)_{1,1} \geq 1$ for all $i$, cf.\@ \cite{EWZ:JAR:08}.

\ifextended
Below $\gg$ is defined in the same mannar as $\gg_\epsilon$ but replacing $>_\epsilon$ by $>$.

\begin{corollary}\label{c:matrix}
  Let $\PTRSone$ be a finite PTRS.
  If $\PTRSone \subseteq {\gg^{\E}_\PAone}$ for a monotone matrix interpretation $\PAone$,
  then $\PTRSone$ is \SPAST.
\end{corollary}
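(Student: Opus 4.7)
The plan is to reduce Corollary~\ref{c:matrix} to Proposition~\ref{p:matrix} by extracting a uniform $\epsilon > 0$ from the finitely many rules of $\PTRSone$, mimicking the argument sketched just before Corollary~\ref{c:polynomial}. Concretely, for each rule $l \to d \in \PTRSone$ I will exhibit a positive lower bound $\epsilon_{l\to d}$ on the first coordinate of $\Int{l} - \E(\Int{d})$ that is uniform in the assignment $\alpha$, and then set $\epsilon := \min \{\epsilon_{l\to d} \mid l \to d \in \PTRSone\}$, which is strictly positive because the minimum is taken over a finite set.

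The key structural observation is that, since each $\I{\fs}$ is of the affine form \eqref{eq:mat}, the interpretation $\Int{t}$ of any term $t$ is an affine function of the values $\alpha(x) \in \Realpos^m$ attached to the variables in $t$. Consequently, for a rule $l \to d = \prs{p_1:t_1,\dots; p_n:t_n}$ the map
\[
h(\alpha) \;\defsym\; \Int{l} - \E\bigl(\Int{d}\bigr) \;=\; \Int{l} - \sum_{i=1}^n p_i \cdot \Int{t_i}
\]
is an affine function $\Realpos^{m\cdot|\Var(l,d)|} \to \Real^m$. Let $h_1$ denote its first component and $h_j$ ($j > 1$) the others. The hypothesis $l \gg^{\E}_\PAone d$ says precisely that $h_1(\alpha) > 0$ and $h_j(\alpha) \ge 0$ for $j > 1$, for every non-negative assignment $\alpha$.

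Now I would invoke the following elementary fact: an affine function $g(\vec y) = \vec a \cdot \vec y + b$ on the non-negative orthant that satisfies $g(\vec y) > 0$ everywhere must have $b > 0$ (take $\vec y = \vec 0$) and $\vec a \ge \vec 0$ (otherwise $g(\vec y) \to -\infty$ along a suitable ray). The analogue with $\ge 0$ in place of $> 0$ gives $b \ge 0$ and $\vec a \ge \vec 0$. Applied to $h_1$ this yields $\epsilon_{l\to d} \defsym h_1(\vec 0) > 0$ and $h_1(\alpha) \ge \epsilon_{l\to d}$ for every $\alpha$; applied to $h_j$ ($j>1$) it yields $h_j(\alpha) \ge 0$ for every $\alpha$. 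Hence $\Int{l} \gg_{\epsilon_{l\to d}} \E(\Int{d})$ uniformly in $\alpha$.

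Taking $\epsilon := \min_{l\to d \in \PTRSone} \epsilon_{l\to d} > 0$ (finiteness of $\PTRSone$ is used only here) gives $\Int{l} \gg_\epsilon \E(\Int{d})$ for every rule and every assignment, so Proposition~\ref{p:matrix} delivers \SPAST. The main -- and indeed only -- technical point is the affine/non-negative-orthant lemma justifying that the infimum of $h_1$ is attained at the zero assignment; everything else is bookkeeping, exactly as in the polynomial case.
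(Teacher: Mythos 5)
Your proposal is correct and follows essentially the same route the paper takes: the paper proves this corollary by the remark immediately preceding it, which (as in the polynomial case à la Lucas) extracts $\epsilon_{l\to d}$ by evaluating the difference at the zero assignment, argues that this value is a lower bound for every non-negative assignment, and takes the minimum over the finitely many rules before invoking Proposition~\ref{p:matrix}. Your only addition is to spell out explicitly the affine-function-on-the-orthant lemma that the paper leaves implicit, which is a welcome but not essentially different elaboration.
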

\fi

\begin{example}
  \renewcommand{\I}{\interpretation[]}
  \renewcommand{\Int}{\interpret[]}
  Consider the PTRS consisting of the single probabilistic rule 
  \[
    \fun{a}(\fun{a}(x)) \to \prs{p:\fun{a}(\fun{a}(\fun{a}(x))); 1-p:\fun{a}(\fun{b}(\fun{a}(x)))}
    \tpkt
  \]
  Consider the two-dimensional matrix interpretation 
  \begin{align*}
    \I{\fun{a}}(\vec{\paone}) = 
    &
      \begin{bmatrix}
        1 & 1 \\
        0 & 0
      \end{bmatrix}
      \cdot \vec{\paone}
      + 
      \begin{bmatrix} 0 \\ 1 \end{bmatrix}\tkom
    & \I{\fun{b}}(\vec{\paone}) = 
    &
      \begin{bmatrix}
        1 & 0 \\
        0 & 0
      \end{bmatrix}
      \cdot \vec{\paone} \tpkt
  \end{align*}
  Then we have
  \newcommand\bmat[1]{\begin{bmatrix}#1\end{bmatrix}}
  \begin{align*}
    \Int{\fun{a}(\fun{a}(x))}
    = \bmat{x_1+x_2+1\\1}
    & \gg_{1-2p}
    \bmat{x_1+x_2+2p\\1}\\&=
    p \cdot \Int{\fun{a}(\fun{a}(\fun{a}(x)))} + (1 - p) \cdot \Int{\fun{a}(\fun{b}(\fun{a}(x)))}
  \end{align*}
  where $\alpha(x) = (x_1,x_2)^T$.
  Hence this PARS is \SPAST if $p < \frac{1}{2}$, by Proposition~\ref{p:matrix}.
\end{example}
It is worthy of note that the above example cannot be handled with polynomial interpretations, intuitively
because monotonicity enforces the interpretation of the probable reducts $\fun{a}(\fun{a}(\fun{a}(x)))$ and $\fun{a}(\fun{b}(\fun{a}(x)))$ to be greater than that of the left-hand side $\fun{a}(\fun{a}(x))$. 
Generally, polynomial and matrix interpretations are incomparable in strength. 


\ifextended
\section{Implementation}
We extended the termination prover \texttt{NaTT}~\cite{NaTT} with a
syntax for probabilistic rules, and implemented the probabilistic
versions of polynomial and matrix interpretations.
\ifextended

The input format extends the WST format.\footnote{%
\url{https://www.lri.fr/~marche/tpdb/format.html}, accessed November 14, 2017.
}
A probabilistic rewrite rule is specified by
\texttt{$l$ -> $w_1$ : $r_1$ || $\dots$ || $w_n$ : $r_n$},
indicating the probabilistic rewrite rule
\begin{equation}
l \to \prs{\frac{w_1}{w}:r_i,\dots; \frac{w_n}{w}:r_n}
\quad\text{with $w = \sum_{j=1}^n w_j$.}
\label{eq:implement rule}
\end{equation}

The problem of finding interpretations is encoded as
a \emph{satisfiability modulo theory} (SMT) problem and solved by an SMT solver.
We already have an implementation to encode
that $\Int{l} > \Int{r}$ holds for arbitrary $\alpha$, so we only need a little extension to encode
\[
w \cdot \Int{l} > w_1\cdot\Int{r_1}+\dots+w_n\cdot\Int{r_n}
\]
which express the desired orientation condition of probabilistic rule
\eqref{eq:implement rule}.
\else
For usage and implementation details, we refer to the extended version of this paper.
Here we only report that
we tested the implementation on the examples presented in the paper and successfully found
termination proofs.
\fi

The following example would deserve some attention.
\begin{example}
Consider the following encoding of \cite[Figure~1]{FH:POPL:15}:
\def\h{\texttt{?}}
\def\cnt{\texttt{\$}}
\def\s{\texttt{s}}
\def\g{\texttt{g}}
\def\f{\texttt{f}}
\def\z{\texttt0}
\begin{align*}
\h(x) &\to \prs{\half: \h(\s(x)); \half:\cnt(\g(x))} &
\cnt(\z) & \to \prs{1:\z} \\
\h(x) &\to \prs{1:\cnt(\f(x))}&
\cnt(\s(x)) &\to \prs{1:\cnt(x)}
\end{align*}
describing a game where
the player (strategy) can choose either to quit the game and ensure prize $\cnt(\f(x))$,
or to try a coin-toss which on success increments the score
and on failure ends the game with consolation prize $\cnt(\g(x))$.

When $\f$ and $\g$ can be bounded by linear polynomials,
it is possible to automatically prove that the system is \SPAST.
For instance, with
rules for $\f(x) = 2x$ and $\g(x) = \lfloor \tfrac{x}2 \rfloor$, 
\texttt{NaTT} 
(combined with the SMT solver \texttt{z3} version 4.4.1)
found the following polynomial interpretation proving \SPAST:
\begin{align*}
\I\h(x) &= 7x + 11&
\I\s(x) &= x + 1 &
\I\z &= 1 
\\
\I\f(x) &= 3x + 1 &
\I\g(x) &= 2x + 1 &
\I\cnt(x) &= 2x + 1\\
\end{align*}
\end{example}

\section{Conclusion}
This is a study on how much of the classic interpretation-based
techniques well known in term rewriting can be extended to
\emph{probabilistic} term rewriting, and to what extent they remain
automatable. The obtained results are quite encouraging, although
finding ways to combine techniques is crucial if one wants to capture
a reasonably large class of systems, similarly to what happens in
ordinary term rewriting~\cite{Avanzini:Diss:13}. 
Another hopeful future work includes extending our result for proving \AST,
not only \SPAST.


\else
\section{Conclusion}

\fi

\bibliographystyle{splncs03}

\end{document}
